\newtheorem{Theorem}{Theorem}[section]
\newtheorem{Corollary}[Theorem]{Corollary}
\newtheorem{Lemma}[Theorem]{Lemma}
\newtheorem{Proposition}[Theorem]{Proposition}
\newtheorem{Definition}[Theorem]{Definition}
\title{The Complexity of Quantum Circuit Mapping with Fixed Parameters}
\author{{Pengcheng Zhu}\\
	School of Information Science and Technology\\
	Nantong University\\
	Nantong, 226000, China\\
	\texttt{zhupc@ ntu.edu.cn}\\
	\And
	{Shenggen Zheng} \\
	Peng Cheng Laboratory\\
	Shenzhen 518055, China\\
	\texttt{zhengshg@pcl.ac.cn}\\
	\And
	{Lihua Wei} \\
	Department of Information and Computing Science\\
	Suqian University \\
	Suqian, 223800, China\\
	\texttt{wlh\_sqc@163.com}\\
	\And
	{Xueyun Cheng} \\
	School of Information Science and Technology\\
	Nantong University\\
	Nantong, 226000, China\\
	\texttt{chen.xy@ntu.edu.cn}\\
	\And
	{Zhijin Guan} \\
	School of Information Science and Technology\\
	Nantong University\\
	Nantong, 226000, China\\
	\texttt{guan.zj@ntu.edu.cn}\\
	\And
	{Shiguang Feng} \\
	School of Information Science and Technology\\
	Nantong University\\
	Nantong, 226000, China\\
	\texttt{shigfeng@ntu.edu.cn}\\
}
\begin{document}
	\maketitle
	
	\begin{abstract}
		A quantum circuit must be preprocessed before implementing on NISQ devices due to the connectivity constraint. Quantum circuit mapping (QCM) transforms the circuit into an equivalent one that is compliant with the NISQ device's architecture constraint by adding SWAP gates. The QCM problem asks the minimal number of auxiliary SWAP gates, and is NP-complete. The complexity of QCM with fixed parameters is studied in the paper. We give an exact algorithm for QCM, and show that the algorithm runs in polynomial time if the NISQ device's architecture is fixed. If the number of qubits of the quantum circuit is fixed, we show that the QCM problem is NL-complete by a reduction from the undirected shortest path problem.
		Moreover, the fixed-parameter complexity of QCM is W[1]-hard when parameterized by the number of qubits of the quantum circuit. We prove the result by a reduction from the clique problem. If taking the depth of the quantum circuits and the coupling graphs as parameters, we show that the QCM problem is still NP-complete over shallow quantum circuits, and planar, bipartite and degree bounded coupling graphs.
	\end{abstract}

	\keywords{qubit mapping \and quantum circuit transformation \and parameterized complexity \and NISQ device}

\section{Introduction}

Quantum computing is a new type of computation by employing the laws of quantum mechanics to perform complex tasks. It shows theoretical advantages over classical computing on some problems such as integer factorization and unsorted database search. Quantum computing has applications in many fields including cryptography, chemistry, biology, artificial intelligence and others.
The number of qubits in quantum computers has increased steadily in the past several years. Now is the noisy intermediate-scale quantum (NISQ) era~\cite{Preskill2018}. The quantum devices have dozens to hundreds of qubits with rather limited coherence time, and only support a few kinds of elementary quantum gates with nonnegligible errors. Moreover, current NISQ devices have the connectivity constraints which require that any two-qubit operation can only be applied to adjacent qubits. 
A quantum circuit must be transformed into a functionally equivalent circuit that is compliant with the connectivity constraint of the quantum device before executing~\cite{maslov2007quamplace}. 
Quantum circuit mapping (QCM) is the process that transforms the circuit by adding SWAP gates~\cite{childs2019circuit}. Due to the error caused by the decoherence and inherent noise in the NISQ device, the number of auxiliary SWAP gates should be as small as possible.
The QCM problem is, given a quantum circuit, a quantum architecture (denoted by a coupling graph in the paper), and a number $k$, asking whether $k$ SWAP gates is enough to transform the quantum circuit into an equivalent one that is compliant with the quantum architecture's constraint.
Deciding the minimal number of swap actions is an NP-complete problem~\cite{Botea2018, Marcos2018}, which implies that it is unlikely to find a polynomial time algorithm. There are many algorithms for QCM based on heuristic and approximation methods~\cite{li2019tackling,wille2019mapping,Zulehner2017exact,zhou2020monte,Chhangte2022mapibm}, which provide efficient solutions by taking advantage of the inner structural features of the quantum circuits and quantum devices. For example, mapping the logical qubits interacting with other qubits frequently to the physical qubits with high connecting degree can reduce the number of SWAP gates dramatically in the transformation~\cite{zulehner2019,niu2020hardware,zhu2021iterated}, and by exploring a fixed number of layers in the circuit in advance can give a local optimal solution without the large time consumption~\cite{li2020qubit,zhu2020dynamic}. So study the effects of the topology of quantum circuits and quantum architectures on the complexity of QCM is of importance for practical applications.

The paper~\cite{Beals2013efficient} studied the time-overhead caused by adding gates to permute qubits on 1D, 2D and hypercube quantum architectures, and~\cite{Brierley2017efficient} proved the time-overhead over cyclic butterfly architectures, whose coupling graphs are 4-degree bounded. We study systematically the complexity of QCM with the number of qubits, the depth of the quantum circuits, the topology of quantum architectures and the number of SWAP gates needed as the parameters, respectively, in this paper. First an exact algorithm that computes the minimal number of SWAP gates for the transformation is given. We show that the algorithm runs in polynomial time if either the number of qubits in the quantum circuit or the number of nodes in the coupling graph is bounded by a constant. Moreover, QCM is NL-complete if the the number of qubits in the quantum circuit is fixed. The result is obtained by a reduction from the undirected shortest path problem for graphs with maximum degree 3, which proved to be NL-complete in the paper. We also show that QCM is W[1]-hard parameterized by the number of qubits. The result is obtained by a reduction from the clique problem. The $W[i]\,(i\geq 0)$ is a set of parameterized problems to capture the fixed-parameter intractability. Every $W[j]\,(j\geq 1)$-hard problem is believed to be fixed-parameter intractable. Hence, the QCM problem is fixed-parameter intractable, and unlikely to have a fully polynomial-time approximation scheme.

The depth of the quantum circuits and the coupling graphs are the other two parameters considered in the paper. The motivation is that shallow quantum circuits are the kind of quantum circuits with fixed depth. They are powerful for some work and easily to implement on NISQ devices due to the short running time~\cite{bravyi2020}. On the other hand, most of the current NISQ device's architectures have low connecting degree, e.g., the IBM QX20 Tokyo architecture, the Rigetti 16Q-Aspen architecture, and the IBM heavy-hex lattice architecture.
We show that the QCM problem is still NP-complete on shallow quantum circuits. The result also holds if the input coupling graphs are either planar, bipartite and degree bounded, or grid graphs that are finite induced subgraphs of the (infinite) grids made up of the squares, regular hexagons and equilateral triangles, respectively. These results are obtained by reductions from two famous NP-complete problems: the Hamiltonian path problem and the Hamiltonian cycle problem. We also prove that QCM is NP-complete if the number of SWAP gates allowed is fixed. The main results in the paper are summarized in Table~\ref{tab-complexy}. Combing these results we see that the number of qubits is the key factors that affect the complexity of QCM. Finding efficient algorithms for QCM on quantum circuits with a reasonable number of qubits is theoretically possible.

\begin{table}[H]
	\centering
	\caption{The complexity of QCM with fixed parameters. \label{tab-complexy}}
	\footnotesize
	\begin{tabular}{|c|c|c|c|c|c|}
		\hline
		\diagbox{\makecell[c]{Quantum\\ circuits}}{\makecell[c]{Coupling\\ graphs}}& all & \makecell[c]{4-degree\\ bounded} & \makecell[c]{planar, \\bipartite,\\ 3-degree\\ bounded} & \makecellbox[c]{grid \\ graphs}& \makecell[c]{fixed\\ number\\ of nodes} \\ \hline
		fixed number of qubits &  NL-complete    &    NL    &  NL           & NL &  NL  \\ \hline
		depth of 3       &  NP-complete   &   NP-complete   &  NP-complete  & NP-complete &  P  \\ \hline
		depth of 2       &  NP-complete   &   NP-complete   &   - & - &    P   \\ \hline
	\end{tabular}
\end{table}

The paper is organized as follows. In Section~\ref{sec-pre}, we give the basic definitions about quantum circuits and quantum circuit mapping. In Section~\ref{sec-complexity}, we study the complexity of QCM with fixed parameters. We give an exact algorithm for QCM to compute the minimal number of SWAP gates in Subsection~\ref{subsec-exactalgo}, and consider its complexity when the number of qubits is bounded in Subsection~\ref{subsec-qubitnum}. In Subsection~\ref{subsec-paracomplxy}, we prove the fixed-parameter complexity of QCM. In Subsection~\ref{subsec-depthcoupl}, we consider the complexity of QCM with the depth of quantum circuits and the coupling graphs as parameters. In Subsection~\ref{subsec-numberofswap}, we prove the complexity of QCM when the number of SWAP gates is fixed. Finally, we conclude the paper in Section~\ref{sec-discus}.

\section{Preliminaries}\label{sec-pre}
\subsection{Basic definitions and notations}
A \textit{qubit} is the basic unit of quantum computing. The classical bit represents a logical state which has value either ``$0$'' or ``$1$'',  a qubit can be in a state $\ket{\phi} = \alpha \ket{0} + \beta \ket{1}$ that is the superposition of two basis states $\ket{0}$ and $\ket{1}$, where $\alpha,\beta$ are complex numbers and $|\alpha|^2 + |\beta|^2 =1$. A \textit{quantum gate} is a unitary operator on a group of qubits. A \textit{quantum circuit} is composed of a set of qubits and a sequence of quantum gates that operate on these qubits. The quantum circuit is a common model for quantum computation, and can be described as a gate-array diagram, in which qubits are represented as horizontal lines and quantum gates are different blocks that operate on those lines. 
Figure~\ref{fig-cirdep-a} is a quantum circuit that contains 4 qubits and 7 quantum gates.
A set of quantum gates are \textit{universal} if any quantum gate can be decomposed to a combination of quantum gates from it. 
The two-qubit CNOT gate and all single-qubit gates are widely used universal quantum gates.
In this paper, we focus only on the quantum circuits composed of two-qubit quantum gates and single-qubit quantum gates.
We use the convention that the leftmost gate in the quantum circuit executes first, and denote a quantum circuit by $\langle Q, \Gamma \rangle$, where $Q$ is the set of qubits and $\Gamma$ is a sequence of gates sorted by the execution order.

\begin{Definition}
	The topology graph of a quantum circuit $\langle Q, \Gamma \rangle$ is an undirected graph $(Q,E_t)$, where $E_t$ is a set of edges over the qubits in $Q$, such that for any two qubits $q_i$ and $q_j$, $(q_i,q_j)\in E_t$ if and only if $q_i$ and $q_j$ are operated by a two-qubit gate in $\Gamma$. 
\end{Definition}

\begin{Definition}
	 The dependency graph of a quantum circuit  $\langle Q, \Gamma \rangle$ is an acyclic directed graph $(\Gamma,E_d)$, where every gate in $\Gamma$ is a node, and $E_d$ is a set of edges over the gates in $\Gamma$ such that for any two gates $g_i$ and $g_j$, $(g_i,g_j)\in E_d$ if and only if an output of $g_i$ is an input of $g_j$. 
\end{Definition}

\begin{figure}[h]
	\centering
	\begin{adjustbox}{width=0.35\textwidth}
		\begin{quantikz}
			\lstick{$q_0$}  &  \gate{H}\gategroup[wires=1,steps=1,style={dashed, rounded corners, inner sep=1pt}]{$g_1$}  
			&  [0.5cm] \ctrl{1}\gategroup[wires=2,steps=1,style={dashed, rounded corners, inner xsep=5pt, inner ysep=1pt}]{$g_3$}  
			&  [0.5cm]\qw  
			&  [0.5cm]\ctrl{1}\gategroup[wires=2,steps=1,style={dashed, rounded corners, inner sep=1pt}]{$g_6$}  
			&  [0.5cm]\qw \\		
			\lstick{$q_1$}  &  \qw     
			&  [0.5cm]\targ{}  
			&  [0.5cm]\targ{}\gategroup[wires=3,steps=1,style={dashed, rounded corners, inner xsep=5pt, inner ysep=1pt}]{$g_5$} 
			&  [0.5cm]\targ{} 
			&  [0.5cm]\qw \\
			\lstick{$q_2$}  &  \gate{T}\gategroup[wires=1,steps=1,style={dashed, rounded corners, inner sep=1pt}]{$g_2$}
			&  [0.5cm]\ctrl{1}\gategroup[wires=2,steps=1,style={dashed, rounded corners, inner xsep=5pt, inner ysep=1pt}, label style={label position=below, yshift=-0.4cm}]{$g_4$} 
			&  [0.5cm]\qw       
			&  [0.5cm]\qw 
			&  [0.5cm]\qw \\
			\lstick{$q_3$}  &  \qw      
			&  [0.5cm]\targ{}   
			&  [0.5cm]\ctrl{-2}
			&  [0.5cm]\gate{H}\gategroup[wires=1,steps=1,style={dashed, rounded corners, inner sep=1pt}, label style={label position=below, yshift=-0.4cm}]{$g_7$} 
			&  [0.5cm]\qw
		\end{quantikz}
	\end{adjustbox}
	\hspace{1cm}
	\begin{adjustbox}{width=0.16\textwidth}
		\begin{tikzcd}[cells={nodes={draw, circle}}]
			g_1 \arrow[r] & g_3 \arrow[r] \arrow[rd] & g_6  \\
			g_2 \arrow[r] & g_4 \arrow[r] & g_5 \arrow[d] \arrow[u] \\
			&						  & g_7
		\end{tikzcd}
	\end{adjustbox}
	\subfloat[\label{fig-cirdep-a}]{\hspace{0.4\linewidth}}
	\subfloat[\label{fig-cirdep-b}]{\hspace{0.3\linewidth}}
	\bigskip
	
	\begin{adjustbox}{width=0.35\textwidth}
		\begin{quantikz}
			\lstick{$q_0$}  &  \gate{H}\gategroup[wires=4,steps=1,style={dashed, inner sep=1pt}, label style={yshift=0.2cm}]{layer1}  
			&  [0.5cm] \ctrl{1}\gategroup[wires=4,steps=1,style={dashed, inner xsep=5pt, inner ysep=1pt}, label style={yshift=0.2cm}]{layer2}  
			&  [0.5cm] \qw  \gategroup[wires=4,steps=1,style={dashed, inner xsep=5pt, inner ysep=1pt}, label style={yshift=0.2cm}]{layer3}  
			&  [0.5cm]\ctrl{1}\gategroup[wires=4,steps=1,style={dashed, inner sep=1pt}, label style={yshift=0.2cm}]{layer4}  
			&  [0.5cm]\qw \\		
			\lstick{$q_1$}  &  \qw     
			&  [0.5cm]\targ{}  
			&  [0.5cm]\targ{}
			&  [0.5cm]\targ{} 
			&  [0.5cm]\qw \\
			\lstick{$q_2$}  &  \gate{T}
			&  [0.5cm]\ctrl{1}
			&  [0.5cm]\qw       
			&  [0.5cm]\qw 
			&  [0.5cm]\qw \\
			\lstick{$q_3$}  &  \qw      
			&  [0.5cm]\targ{}   
			&  [0.5cm]\ctrl{-2}
			&  [0.5cm]\gate{H}
			&  [0.5cm]\qw
		\end{quantikz}
	\end{adjustbox}
	\hspace{1cm}
	\begin{adjustbox}{width=0.16\textwidth}
		\hspace{0.1cm}
		\begin{tikzcd}[cells={nodes={draw, circle}},every arrow/.append style={dash}]
			q_0 \arrow[r] & q_1 \arrow[d]   \\
			q_2 \arrow[r] & q_3 
		\end{tikzcd}
		\hspace{0.5cm}
	\end{adjustbox}
	\subfloat[\label{fig-cirdep-c}]{\hspace{0.4\linewidth}}
	\subfloat[\label{fig-cirdep-d}]{\hspace{0.3\linewidth}}
	\caption{\label{fig-cirdep} The illustration of a quantum circuit (\textbf{a}), and (\textbf{b}) its dependency graph, (\textbf{c}) its layer partition, (\textbf{d}) is topology graph.}
\end{figure}
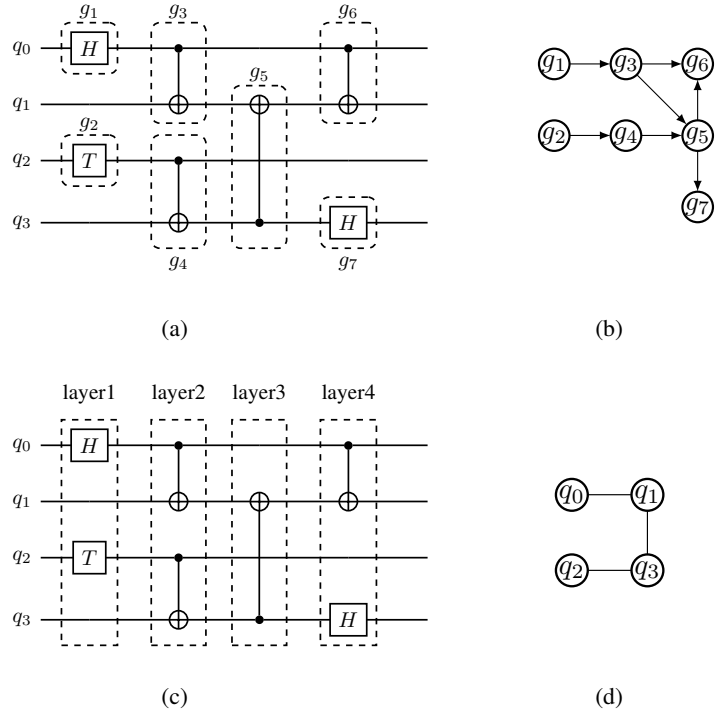

Figure~\ref{fig-cirdep-b} and Figure~\ref{fig-cirdep-d} are the dependency graph and the topology graph of Figure~\ref{fig-cirdep-a}, respectively. Let $\Phi$ be a quantum circuit, and $g_i,g_j$ two gates in $\Phi$. We say that $g_j$ \textit{depends on} $g_i$ if there is a path from $g_i$ to $g_j$ in the dependency graph of $\Phi$.

\begin{Definition}
	A quantum circuit $\Psi$ is a subcircuit of the quantum circuit $\Phi$ if $\Psi$ is obtained by removing some gates in $\Phi$ and all gates that they depend on are also be removed. 
\end{Definition}

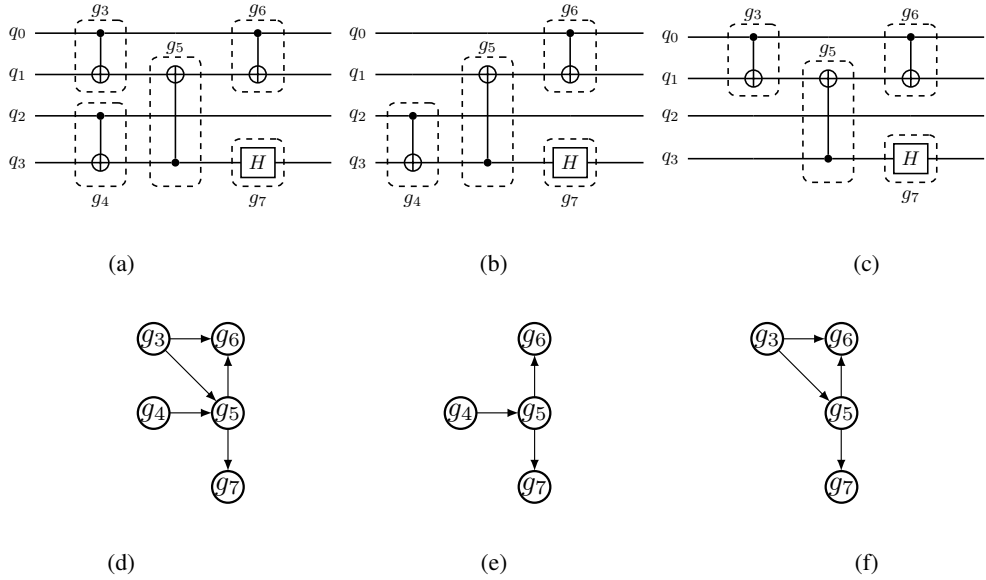
\begin{figure}[h]
	\centering
	\begin{adjustbox}{width=0.8\textwidth}
		\begin{quantikz}
			\lstick{$q_0$} &  [0.5cm] \ctrl{1}\gategroup[wires=2,steps=1,style={dashed, rounded corners, inner xsep=5pt, inner ysep=1pt}]{$g_3$}  
			&  [0.5cm]\qw  
			&  [0.5cm]\ctrl{1}\gategroup[wires=2,steps=1,style={dashed, rounded corners, inner sep=1pt}]{$g_6$}  
			&  [0.5cm]\qw \\		
			\lstick{$q_1$} &  [0.5cm]\targ{}  
			&  [0.5cm]\targ{}\gategroup[wires=3,steps=1,style={dashed, rounded corners, inner xsep=5pt, inner ysep=1pt}]{$g_5$} 
			&  [0.5cm]\targ{} 
			&  [0.5cm]\qw \\
			\lstick{$q_2$} &  [0.5cm]\ctrl{1}\gategroup[wires=2,steps=1,style={dashed, rounded corners, inner xsep=5pt, inner ysep=1pt}, label style={label position=below, yshift=-0.4cm}]{$g_4$} 
			&  [0.5cm]\qw       
			&  [0.5cm]\qw 
			&  [0.5cm]\qw \\
			\lstick{$q_3$} &  [0.5cm]\targ{}   
			&  [0.5cm]\ctrl{-2}
			&  [0.5cm]\gate{H}\gategroup[wires=1,steps=1,style={dashed, rounded corners, inner sep=1pt}, label style={label position=below, yshift=-0.4cm}]{$g_7$} 
			&  [0.5cm]\qw
		\end{quantikz}
		\begin{quantikz}
			\lstick{$q_0$} & \qw
			&  [0.5cm]\qw  
			&  [0.5cm]\ctrl{1}\gategroup[wires=2,steps=1,style={dashed, rounded corners, inner sep=1pt}]{$g_6$}  
			&  [0.5cm]\qw \\		
			\lstick{$q_1$} &  [0.5cm]\qw
			&  [0.5cm]\targ{}\gategroup[wires=3,steps=1,style={dashed, rounded corners, inner xsep=5pt, inner ysep=1pt}]{$g_5$} 
			&  [0.5cm]\targ{} 
			&  [0.5cm]\qw \\
			\lstick{$q_2$} &  [0.5cm]\ctrl{1}\gategroup[wires=2,steps=1,style={dashed, rounded corners, inner xsep=5pt, inner ysep=1pt}, label style={label position=below, yshift=-0.4cm}]{$g_4$} 
			&  [0.5cm]\qw       
			&  [0.5cm]\qw 
			&  [0.5cm]\qw \\
			\lstick{$q_3$} &  [0.5cm]\targ{}   
			&  [0.5cm]\ctrl{-2}
			&  [0.5cm]\gate{H}\gategroup[wires=1,steps=1,style={dashed, rounded corners, inner sep=1pt}, label style={label position=below, yshift=-0.4cm}]{$g_7$} 
			&  [0.5cm]\qw
		\end{quantikz}
		\begin{quantikz}
			\lstick{$q_0$} &  [0.5cm] \ctrl{1}\gategroup[wires=2,steps=1,style={dashed, rounded corners, inner xsep=5pt, inner ysep=1pt}]{$g_3$}  
			&  [0.5cm]\qw  
			&  [0.5cm]\ctrl{1}\gategroup[wires=2,steps=1,style={dashed, rounded corners, inner sep=1pt}]{$g_6$}  
			&  [0.5cm]\qw \\		
			\lstick{$q_1$} &  [0.5cm]\targ{}  
			&  [0.5cm]\targ{}\gategroup[wires=3,steps=1,style={dashed, rounded corners, inner xsep=5pt, inner ysep=1pt}]{$g_5$} 
			&  [0.5cm]\targ{} 
			&  [0.5cm]\qw \\
			\lstick{$q_2$} &  [0.5cm]\qw
			&  [0.5cm]\qw       
			&  [0.5cm]\qw 
			&  [0.5cm]\qw \\
			\lstick{$q_3$} &  [0.5cm]\qw 
			&  [0.5cm]\ctrl{-2}
			&  [0.5cm]\gate{H}\gategroup[wires=1,steps=1,style={dashed, rounded corners, inner sep=1pt}, label style={label position=below, yshift=-0.4cm}]{$g_7$} 
			&  [0.5cm]\qw
		\end{quantikz}
	\end{adjustbox}
	\subfloat[\label{fig-subcirdep-a}]{\hspace{0.3\linewidth}}
	\subfloat[\label{fig-subcirdep-b}]{\hspace{0.3\linewidth}}
	\subfloat[\label{fig-subcirdep-c}]{\hspace{0.3\linewidth}}
	\bigskip
	
	\begin{adjustbox}{width=0.6\textwidth}
		\begin{tikzcd}[cells={nodes={draw, circle}}]
			g_3 \arrow[r] \arrow[rd] & g_6  \\
			g_4 \arrow[r]			 & g_5 \arrow[d] \arrow[u] \\
			& g_7
		\end{tikzcd}
		\hspace{2cm}
		\begin{tikzcd}[cells={nodes={draw, circle}}]
			& g_6  \\
			g_4	 \arrow[r]		   & g_5 \arrow[d] \arrow[u] \\
			& g_7
		\end{tikzcd}
		\hspace{2cm}
		\begin{tikzcd}[cells={nodes={draw, circle}}]
			g_3 \arrow[r] \arrow[rd]   &  g_6  \\
			&  g_5 \arrow[d] \arrow[u] \\
			&  g_7
		\end{tikzcd}
	\end{adjustbox}
	\subfloat[\label{fig-subcirdep-d}]{\hspace{0.3\linewidth}}
	\subfloat[\label{fig-subcirdep-e}]{\hspace{0.3\linewidth}}
	\subfloat[\label{fig-subcirdep-f}]{\hspace{0.3\linewidth}}
	\bigskip
	\caption{\label{fig-subcirdep} The illustration of subcircuits. (\textbf{a}), (\textbf{b}), (\textbf{c}) are there subcircuits of Figure~\ref{fig-cirdep-a}. (\textbf{d}), (\textbf{e}), (\textbf{f}) are their dependency graphs respectively.}
\end{figure}

The quantum circuits (a), (b), (c) in Figure~\ref{fig-subcirdep} are three subcircuits of Figure~\ref{fig-cirdep-a}, and (d), (e), (f) are their dependency graphs, respectively.
Let $\Psi_1$ and $\Psi_2$ be two subcircuits of $\Phi$. 
We say that $\Psi_1$ is \textit{smaller than} $\Psi_2$ if $\Psi_1$ is a subcircuit of $\Psi_2$. $\Psi_1$ and $\Psi_2$ are \textit{incomparable} if neither of them is smaller than the other one. 
Let $S$ be a set of subcircuits of $\Phi$. 
A subcircuit $\Psi_1$ in $S$ is \textit{minimal} if no subcircuit in $S$ is smaller than $\Psi_1$.
The set $S$ is \textit{minimized} if all subcircuits in it are minimal. Every pair of subcircuits in a minimized set are incomparable,
and we can minimize a subcircuit set by removing all non-minimal subcircuits in it.

A quantum circuit can be uniquely divided into \textit{layers} which are maximal sets of gates such that all gates in the same layer can execute in parallel. The quantum circuit of Figure~\ref{fig-cirdep-a} has 4 layers as illustrated in Figure~\ref{fig-cirdep-c}.
The \textit{depth} of a quantum circuit is the number of its layers. We use $depth(\Phi)$ to denote the depth of the quantum circuit $\Phi$.

\subsection{Quantum circuit mapping}

In a quantum circuit, all qubits are assumed to be all-to-all connected. However, in many quantum computers the connectivity of qubits has been reduced, and only adjacent qubits can interact with each other. A quantum computer can be abstracted by a graph that shows the connectivity between qubits.

\begin{figure}[h]
	\centering
	\begin{adjustbox}{width=0.25\textwidth}
		\begin{tikzcd}[every arrow/.append style={dash}]
			\tikz{\node[draw, circle, inner sep=2pt]{$Q_0$}}  \arrow[r] \arrow[d] & \tikz{\node[draw, circle, inner sep=2pt]{$Q_1$}}  \arrow[r] \arrow[d]
			\arrow[rd,shorten <= -4pt, shorten >= -4pt] & \tikz{\node[draw, circle, inner sep=2pt]{$Q_2$}}  \arrow[r] \arrow[d] \arrow[ld, shorten <= -4pt, shorten >= -4pt] & \tikz{\node[draw, circle, inner sep=2pt]{$Q_3$}} \arrow[r] \arrow[d] \arrow[rd, shorten <= -4pt, shorten >= -4pt] & \tikz{\node[draw, circle, inner sep=2pt]{$Q_4$}}  \arrow[d] \arrow[ld, shorten <= -4pt, shorten >= -4pt]  \\
			\tikz{\node[draw, circle, inner sep=2pt]{$Q_5$}}  \arrow[r] \arrow[d] \arrow[rd, shorten <= -4pt, shorten >= -4pt] & \tikz{\node[draw, circle, inner sep=2pt]{$Q_6$}}  \arrow[r] \arrow[d] 
			\arrow[ld, shorten <= -4pt, shorten >= -4pt] & \tikz{\node[draw, circle, inner sep=2pt]{$Q_7$}}  \arrow[r] \arrow[d] \arrow[rd, shorten <= -4pt, shorten >= -4pt] & \tikz{\node[draw, circle, inner sep=2pt]{$Q_8$}}   \arrow[r] \arrow[d] \arrow[ld, shorten <= -4pt, shorten >= -4pt] & \tikz{\node[draw, circle, inner sep=2pt]{$Q_9$}} \arrow[d]  \\
			\tikz{\node[draw, circle, inner sep=1pt]{$Q_{10}$}} \arrow[r] \arrow[d] & \tikz{\node[draw, circle, inner sep=1pt]{$Q_{11}$}} \arrow[r] \arrow[d] \arrow[rd, shorten <= -4pt, shorten >= -4pt] & \tikz{\node[draw, circle, inner sep=1pt]{$Q_{12}$}} \arrow[r] \arrow[d] \arrow[ld, shorten <= -4pt, shorten >= -4pt] & \tikz{\node[draw, circle, inner sep=1pt]{$Q_{13}$}} \arrow[r] \arrow[d] 
			\arrow[rd, shorten <= -4pt, shorten >= -4pt] & \tikz{\node[draw, circle, inner sep=1pt]{$Q_{14}$}} \arrow[d] \arrow[ld, shorten <= -4pt, shorten >= -4pt] \\
			\tikz{\node[draw, circle, inner sep=1pt]{$Q_{15}$}} \arrow[r] & \tikz{\node[draw, circle, inner sep=1pt]{$Q_{16}$}} \arrow[r] & \tikz{\node[draw, circle, inner sep=1pt]{$Q_{17}$}} \arrow[r] & \tikz{\node[draw, circle, inner sep=1pt]{$Q_{18}$}} \arrow[r] & \tikz{\node[draw, circle, inner sep=1pt]{$Q_{19}$}}  
		\end{tikzcd}
	\end{adjustbox}
	\hspace{1cm}
	\begin{adjustbox}{width=0.4\textwidth}
		\begin{tikzcd}[every arrow/.append style={dash}]
			& \tikz{\node[draw, circle, inner sep=2pt]{$Q_1$}} \arrow[r] \arrow[ld,shorten <= -4pt, shorten >= -4pt] &\tikz{\node[draw, circle, inner sep=2pt]{$Q_2$}} \arrow[rd, shorten <= -4pt, shorten >= -4pt] &  &  & \tikz{\node[draw, circle, inner sep=2pt]{$Q_3$}} \arrow[r] \arrow[ld, shorten <= -4pt, shorten >= -4pt] & \tikz{\node[draw, circle, inner sep=2pt]{$Q_4$}} 
			\arrow[rd, shorten <= -4pt, shorten >= -4pt] & \\ 
			\tikz{\node[draw, circle, inner sep=2pt]{$Q_5$}} \arrow[d] &  &  & \tikz{\node[draw, circle, inner sep=2pt]{$Q_6$}} \arrow[r] \arrow[d] & \tikz{\node[draw, circle, inner sep=2pt]{$Q_7$}} \arrow[d]   &     &    & \tikz{\node[draw, circle, inner sep=2pt]{$Q_8$}} \arrow[d]  \\
			\tikz{\node[draw, circle, inner sep=2pt]{$Q_9$}} \arrow[rd, shorten <= -4pt, shorten >= -4pt] &  &  & \tikz{\node[draw, circle, inner sep=1pt]{$Q_{10}$}} \arrow[r] 
			\arrow[ld, shorten <= -4pt, shorten >= -4pt] & \tikz{\node[draw, circle, inner sep=1pt]{$Q_{11}$}} \arrow[rd, shorten <= -4pt, shorten >= -4pt]  &        &        & \tikz{\node[draw, circle, inner sep=1pt]{$Q_{12}$}} \arrow[ld, shorten <= -4pt, shorten >= -4pt] \\
			& \tikz{\node[draw, circle, inner sep=1pt]{$Q_{13}$}} \arrow[r] & \tikz{\node[draw, circle, inner sep=1pt]{$Q_{14}$}} & & & \tikz{\node[draw, circle, inner sep=1pt]{$Q_{15}$}} \arrow[r] & \tikz{\node[draw, circle, inner sep=1pt]{$Q_{16}$}} &
		\end{tikzcd}
	\end{adjustbox}
	\subfloat[\label{fig-quanarcht-a}]{\hspace{0.3\linewidth}}
	\subfloat[\label{fig-quanarcht-b}] {\hspace{0.5\linewidth}}
	\medskip
	\caption{\label{fig-quanarcht} The coupling graphs of (\textbf{a}) IBM QX20 Tokyo and (\textbf{b}) Rigetti 16Q-Aspen.}
\end{figure}
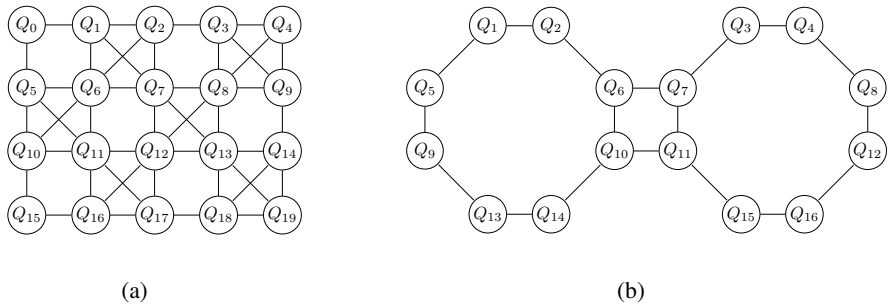

\begin{Definition}
	The coupling graph of a quantum computer $\mathbf{QM}$ is an undirected connected graph $(V,E_c)$, where $V$ is the set of qubits in $\mathbf{QM}$, and for any $q_i,q_j\in V$, $(q_i,q_j)\in E_c$ if and only if $q_i, q_j$ can be operated directly by a two-qubit gate in $\mathbf{QM}$.
\end{Definition}

Due to the nearest neighbor (NN) constraint, a quantum circuit need to be transformed into an equivalent and compliant one before executing in a quantum computer. This process is \textit{quantum circuit mapping} that contains two main steps: \textit{initial placement} and \textit{routing}.
The initial placement is to find a mapping from logical qubits in the quantum circuit to the physical qubits in the quantum computer in order to minimize non-NN interactions.
If two logical qubits that need to interact are not NN, we can move them by adding SWAP gates. Figure~\ref{Fig-qubitswap} shows a circuit using 2 SWAP gates to make $q_0$ and $q_3$ adjacent. The routing step is to find the best way to move logical qubits, i.e. reduces the number of SWAP gates as less as possible. 

	\begin{figure}[h]
	\centering
	\begin{adjustbox}{width=0.22\textwidth}
		\begin{quantikz}[row sep={0.5cm,between origins}]
			\lstick{$q_0$}  &  \qw 		&  \qw 		  &  \rstick{$q_0$} \qw \\		
			\lstick{$q_1$}  &  \qw		&  \swap{1}   &  \rstick{$q_3$} \qw \\
			\lstick{$q_2$}  &  \swap{1} &  \targX{}   &  \rstick{$q_1$} \qw \\
			\lstick{$q_3$}  &  \targX{} &  \qw  	  &  \rstick{$q_2$} \qw
		\end{quantikz}
	\end{adjustbox}
	\qquad
	\begin{adjustbox}{width=0.14\textwidth}
		\begin{tikzpicture}[baseline=1.5cm]
			\draw[step=1cm,gray,very thin] (0,0) grid (3,3);
			\node[fill, circle, inner sep=1.5pt] at (0,2) {};
			\node[fill, circle, inner sep=1.5pt] at (1,2) {};
			\node[] at (2,2) {\Large $\times$};
			\node[] at (2,1) {\Large $\times$};
			\node[] at (0.3,1.6) {\Large $q_0$};
			\node[] at (1.3,1.6) {\Large $q_1$};
			\node[] at (2.3,1.6) {\Large $q_2$};
			\node[] at (2.3,0.6) {\Large $q_3$};
		\end{tikzpicture}
	\end{adjustbox}
	$\Rightarrow$
	\begin{adjustbox}{width=0.14\textwidth}
		\begin{tikzpicture}[baseline=1.5cm]
			\draw[step=1cm,gray,very thin] (0,0) grid (3,3);
			\node[fill, circle, inner sep=1.5pt] at (0,2) {};
			\node[] at (1,2) {\Large $\times$};
			\node[] at (2,2) {\Large $\times$};
			\node[fill, circle, inner sep=1.5pt] at (2,1) {};
			\node[] at (0.3,1.6) {\Large $q_0$};
			\node[] at (1.3,1.6) {\Large $q_1$};
			\node[] at (2.3,1.6) {\Large $q_3$};
			\node[] at (2.3,0.6) {\Large $q_2$};
		\end{tikzpicture}
	\end{adjustbox}
	$\Rightarrow$
	\begin{adjustbox}{width=0.14\textwidth}
		\begin{tikzpicture}[baseline=1.5cm]
			\draw[step=1cm,gray,very thin] (0,0) grid (3,3);
			\node[fill, circle, inner sep=1.5pt] at (0,2) {};
			\node[fill, circle, inner sep=1.5pt] at (1,2) {};
			\node[fill, circle, inner sep=1.5pt] at (2,2) {};
			\node[fill, circle, inner sep=1.5pt] at (2,1) {};
			\node[] at (0.3,1.6) {\Large $q_0$};
			\node[] at (1.3,1.6) {\Large $q_3$};
			\node[] at (2.3,1.6) {\Large $q_1$};
			\node[] at (2.3,0.6) {\Large $q_2$};
		\end{tikzpicture}
	\end{adjustbox}
	\subfloat[\label{fig-qubitswap-a}]{\hspace{0.3\linewidth}}
	\subfloat[\label{fig-qubitswap-b}] {\hspace{0.5\linewidth}}
	\medskip
	\caption{\label{Fig-qubitswap} Move qubit states by SWAP gates. (\textbf{a}) A quantum circuit swapping the states of $q_1$, $q_2$ and $q_3$. (\textbf{b}) The movement of logical qubits in the quantum computer.}
\end{figure}
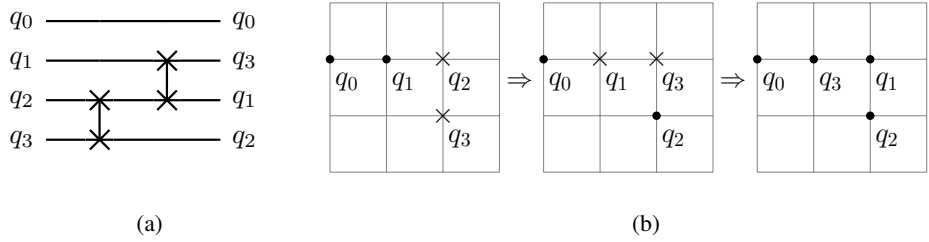

The quantum circuit mapping (QCM) problem is
\begin{description}
	\item[INPUT:] A quantum circuit $\Phi$, a coupling graph $G$, and a number $k$.
	\item[OUTPUT:] Yes, if $\Phi$ can be transformed to satisfy $G$'s NN constraint using at most $k$ SWAP gates; no, otherwise.
\end{description}

\begin{Proposition}\cite{Botea2018, Marcos2018}\label{prop-qctnpk0}
	The QCM problem is NP-complete.
\end{Proposition}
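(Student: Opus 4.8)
The plan is to establish the two directions of NP-completeness separately: membership in NP and NP-hardness.

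For membership, I would take as a certificate an initial placement (an injection from the logical qubits $Q$ into the nodes $V$ of the coupling graph) together with the list of SWAP gates to be inserted, each annotated with the position in $\Gamma$ at which it is applied. The key observation is that a yes-instance always admits a solution whose number of SWAP gates is bounded by a polynomial in $|Q|$ and $|\Gamma|$: on a connected coupling graph one can realign the two operands of any two-qubit gate by moving one token along a shortest path toward the other, costing at most $|V|-1$ SWAPs per gate, hence at most $(|V|-1)\cdot|\Gamma|$ SWAPs in total. It therefore suffices to guess a witness of polynomial size (capping the guessed count at this bound) and then verify in polynomial time that replaying the SWAPs on the placement keeps every two-qubit gate's operands on adjacent physical qubits at the moment it executes, and that the number of inserted SWAPs is at most $k$. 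Since SWAP insertions only relabel the qubit assignment, functional equivalence is automatic, so this check certifies a valid transformation.

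For hardness, I would give a polynomial-time reduction from a known NP-complete graph problem by exploiting the case $k=0$. Given a graph $H$, build a circuit $\langle Q,\Gamma\rangle$ with $Q=V(H)$ and one two-qubit (CNOT) gate for each edge of $H$; by construction its topology graph $(Q,E_t)$ equals $H$. Take the coupling graph to be the target graph $G$ and set $k=0$. With no SWAPs allowed, a valid transformation is exactly a placement, i.e.\ an injection $\pi\colon V(H)\to V(G)$ such that every edge of $H$ maps to an edge of $G$; this is precisely a subgraph embedding of $H$ into $G$. Thus the QCM instance is a yes-instance if and only if $H$ is isomorphic to a subgraph of $G$, and since Subgraph Isomorphism (equivalently, taking $H=K_t$, the Clique problem) is NP-complete even for connected $G$, NP-hardness follows.

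The reduction is clearly computable in polynomial time, so the remaining work is the correctness equivalence, which is where I expect the only real care to be needed. The forward direction is immediate; for the backward direction I must argue that when $k=0$ the execution order of the gates in $\Gamma$ is irrelevant, so that NN-compliance of the whole circuit under a single fixed placement is equivalent to the pure graph-embedding condition, with no interference from gate dependencies or layering. I would also make explicit that the definition of coupling graph forces $G$ to be connected and to have at least $|Q|$ nodes; both are harmless, since connected instances of Subgraph Isomorphism (and of Clique) remain NP-complete and one may always take $G$ with $|V|\ge|V(H)|$. The main obstacle is therefore not the hardness construction itself but the NP-membership argument: one must justify the polynomial bound on the number of SWAP gates so that the certificate stays polynomial-size even when $k$ is encoded in binary.
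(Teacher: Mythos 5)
Your proposal is correct and follows essentially the same route as the paper: membership in NP by guessing an initial mapping together with the sequence of SWAP operations and verifying compliance in polynomial time, and hardness via the $k=0$ case, where satisfying the NN constraint is exactly the statement that the circuit's topology graph embeds as a subgraph of the coupling graph, i.e.\ a reduction from Subgraph Isomorphism. Your additional step of bounding the witness by $(|V|-1)\cdot|\Gamma|$ SWAPs (so the certificate stays polynomial-size even when $k$ is encoded in binary) makes explicit a detail the paper leaves implicit, but the argument is otherwise the same.
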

\begin{proof}
	To prove QCM is in NP, we can guess an initial mapping $\pi$ and $k$ swap operations and verify that whether $\Phi$ can be transformed to satisfy $G$'s NN constraint from $\pi$ by the $k$ swap operations. The verification can be done in polynomial time.
	The hardness of the problem is witnessed by the subgraph isomorphism problem that is NP-complete~\cite{garey1979}.
	That is, if $k=0$, then $\Phi$ satisfies $G$'s NN constraint iff the topology graph of $\Phi$ is a subgraph of $G$.
\end{proof}

\section{The complexity of QCM with parameters}\label{sec-complexity}
In this section, we shall study the complexity of QCM parameterized by the number of qubits, the depth of quantum circuits, the type of coupling graphs, and the number of swap operations. First an exact algorithm to compute the minimal number of swap operations needed is given for QCM. 

\subsection{An exact algorithm for QCM}\label{subsec-exactalgo}
Given a quantum circuit $\Phi=\langle Q, \Gamma \rangle$ and a coupling graph $G=(V,E_c)$, define
\[ 
g(\Phi,G)=
\begin{cases}
	k, & |Q| \leq |V|\\
	\infty, & |Q| > |V| 
\end{cases}
\]
where $k$ is the least number of SWAP gates needed to transform $\Phi$ to satisfy $G$'s NN  constraint. 
For simplicity, we always assume $|Q| \leq |V|$ in the remainder of this paper.
The \textit{distance} between two nodes $a,b$ in $G$ is the length of the shortest path from $a$ to $b$.
The \textit{diameter} of a graph $G$ is $\max\{d(a,b)\,|\,a,b\in V\}$, where $d(a,b)$ is the distance between $a$ and $b$. We use $dia(G)$ to denote the diameter of $G$. The following lemma gives an upper bound for $g(\Phi,G)$.

\begin{Lemma}\label{lem-max-swap}
	$g(\Phi,G)\leq t \cdotp (dia(G)-1)$, where $t$ is the number of two-qubit gates in $\Phi$.
\end{Lemma}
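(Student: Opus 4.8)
The plan is to show that every two-qubit gate can be made executable using at most $dia(G)-1$ SWAP gates, and then argue that the gates can be processed one at a time so the total cost is at most $t \cdot (dia(G)-1)$. The key observation is that a single SWAP gate applied along an edge of $G$ moves a logical qubit from one physical node to an adjacent one. Hence if two logical qubits sit on physical nodes $a$ and $b$ with $d(a,b) = \ell$, we can slide one of them step-by-step along a shortest path of length $\ell$ toward the other, using $\ell - 1$ SWAP gates to make the two logical qubits occupy adjacent physical nodes (the final step brings them to distance $1$, which is the NN requirement). Since $\ell \leq dia(G)$, this costs at most $dia(G)-1$ SWAPs per two-qubit gate.

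First I would fix an arbitrary initial placement $\pi$ (any injection from $Q$ into $V$; one exists because $|Q| \leq |V|$). Then I would process the gates of $\Phi$ in an order consistent with the dependency graph (a topological order). Single-qubit gates require no SWAPs and no qubit movement, so they are free. For each two-qubit gate $g$ acting on logical qubits $q_i, q_j$, I would look at their current physical positions under the running placement, apply a sequence of at most $dia(G)-1$ SWAPs along a shortest path to bring them adjacent, and then execute $g$. The SWAPs update the placement but do not otherwise damage correctness, since a SWAP simply permutes which logical qubit sits on which physical node. After executing all gates this way, the resulting circuit is compliant with $G$'s NN constraint, and the total number of SWAPs used is at most $t \cdot (dia(G)-1)$ because only the $t$ two-qubit gates incur any cost, each contributing at most $dia(G)-1$. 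This exhibits a valid transformation, so the minimal number $g(\Phi,G)$ is bounded above by the same quantity.

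The main subtlety I would watch for is correctness of the moving argument: a shortest path between the nodes holding $q_i$ and $q_j$ may pass through nodes occupied by other logical qubits, but this is harmless because SWAPping along the path merely shuffles those occupants, and the bound $\ell - 1$ counts the edges we must traverse regardless of who sits on the intermediate nodes. I would make explicit that moving $q_i$ the whole way (leaving $q_j$ fixed) and stopping when $q_i$ lands on a neighbor of $q_j$ uses exactly $\ell - 1$ SWAPs, matching the claimed per-gate cost. I expect the only real obstacle to be bookkeeping the invariant that the placement stays a well-defined injection throughout and that executing gates in dependency order guarantees equivalence to the original circuit; the arithmetic of summing $\ell - 1 \leq dia(G) - 1$ over the $t$ two-qubit gates is then immediate.
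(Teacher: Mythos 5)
Your proposal is correct and matches the paper's proof: the paper likewise observes that any two mapped qubits can be made adjacent with at most $dia(G)-1$ SWAPs (sliding along a shortest path) and multiplies by the $t$ two-qubit gates. Your version simply spells out the bookkeeping (topological order, placement invariant, free single-qubit gates) that the paper leaves implicit.
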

\begin{proof}
	Every two qubits mapped to $G$ can be made adjacent using at most $dia(G)-1$ swap operations.
	Suppose that there are $t$ two-qubit gates in $\Phi$, so $t\cdotp(dia(G)-1)$ SWAP gates are enough to transform $\Phi$ to satisfy $G$'s NN constraint.
\end{proof}

	In the following we give an upper bound for the number of subcircuits of $\Phi$.
	Let $\Phi$ be a quantum circuit with $n$ qubits and $l$ layers, we construct a black and white colored $n\times l$ grid $D$ ($n$ rows and $l$ columns) such that if qubit $q_i$ is operated by a gate in the $j$-th layer, then the node $(i,j)$ is black in $D$, and all nodes $(i,j')\,(j<j'\leq l)$ after it are also black in $D$. Then we compress the consecutive columns whose black nodes coincide in each row to one column to get a new grid $D'$. We call $D'$ the \textit{circuit type} of $\Phi$. 
	It is easily seen that there are at most $n$ columns in $D'$.
	In Figure~\ref{fig-cirtype} is an example that shows the procedure from a quantum circuit to its circuit type.

	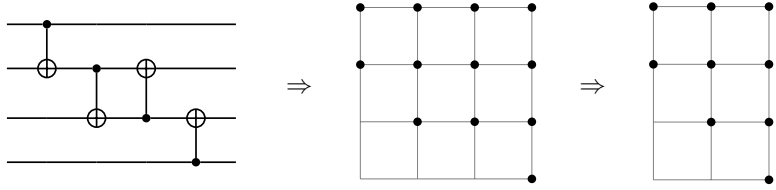
\begin{figure}[h]
		\centering
		\begin{adjustbox}{width=0.2\textwidth}
			\begin{quantikz}
				&   \ctrl{1} &  \qw  	& \qw	    & \qw  		&  \qw \\		
				& 	\targ{}  &  \ctrl{1}& \targ{}	& \qw  		&  \qw \\
				&   \qw		 &  \targ{} & \ctrl{-1} & \targ{}	&  \qw \\
				&   \qw	     &  \qw		& \qw  		& \ctrl{-1}   &  \qw
			\end{quantikz}
		\end{adjustbox}
		\hspace{0.4cm}
		$\Rightarrow$
		\hspace{0.4cm}
		\begin{adjustbox}{width=0.145\textwidth}
			\begin{tikzpicture}[baseline=1.5cm]
				\draw[step=1cm,gray,very thin] (0,0) grid (3,3);
				\node[fill, circle, inner sep=1.5pt] at (0,3) {};
				\node[fill, circle, inner sep=1.5pt] at (1,3) {};
				\node[fill, circle, inner sep=1.5pt] at (2,3) {};
				\node[fill, circle, inner sep=1.5pt] at (3,3) {};
				\node[fill, circle, inner sep=1.5pt] at (0,2) {};
				\node[fill, circle, inner sep=1.5pt] at (1,2) {};
				\node[fill, circle, inner sep=1.5pt] at (2,2) {};
				\node[fill, circle, inner sep=1.5pt] at (3,2) {};
				\node[fill, circle, inner sep=1.5pt] at (1,1) {};
				\node[fill, circle, inner sep=1.5pt] at (2,1) {};
				\node[fill, circle, inner sep=1.5pt] at (3,0) {};
				\node[fill, circle, inner sep=1.5pt] at (3,1) {};
			\end{tikzpicture}
		\end{adjustbox}	
		\hspace{0.4cm}
		$\Rightarrow$
		\hspace{0.4cm}
		\begin{adjustbox}{width=0.1\textwidth}
			\begin{tikzpicture}[baseline=1.5cm]
				\draw[step=1cm,gray,very thin] (0,0) grid (2,3);
				\node[fill, circle, inner sep=1.5pt] at (0,3) {};
				\node[fill, circle, inner sep=1.5pt] at (1,3) {};
				\node[fill, circle, inner sep=1.5pt] at (2,3) {};
				\node[fill, circle, inner sep=1.5pt] at (0,2) {};
				\node[fill, circle, inner sep=1.5pt] at (1,2) {};
				\node[fill, circle, inner sep=1.5pt] at (2,2) {};
				\node[fill, circle, inner sep=1.5pt] at (1,1) {};
				\node[fill, circle, inner sep=1.5pt] at (2,1) {};
				\node[fill, circle, inner sep=1.5pt] at (2,0) {};
			\end{tikzpicture}	
		\end{adjustbox}
		\caption{\label{fig-cirtype} From a quantum circuit to its circuit type.}
	\end{figure}

	Let $n\geq 1$ and $T_n$ be a set of colored $n\times m$  grids ($1\leq m \leq n$) such that a grid is in $T_n$ iff it satisfies:
	\begin{itemize}
	\item At least one node is black in the first column.
	\item If the node $(i,j)$ is black, then the nodes $(i,j')\,(j< j' \leq m)$ are also black.
	\item For every $1< j \leq m$, there is $1\leq i \leq n$ such that $(i,j)$ is black, $(i,j-1)$ is white.
	\end{itemize}
	The grid in $T_n$ is monotone in the sense that the black nodes in every column grow from left to right. As an example, Figure~\ref{Fig-grid} shows four grids in $T_4$. 

	\begin{figure}[h]
	\centering
	\begin{adjustbox}{width=0.55\textwidth}
	\begin{tikzpicture}
	\draw[step=1cm,gray,very thin] (0,0) grid (3,3);
	\node[fill, circle, inner sep=1.5pt] at (0,0) {};
	\node[fill, circle, inner sep=1.5pt] at (1,0) {};
	\node[fill, circle, inner sep=1.5pt] at (2,0) {};
	\node[fill, circle, inner sep=1.5pt] at (3,0) {};
	\node[fill, circle, inner sep=1.5pt] at (1,1) {};
	\node[fill, circle, inner sep=1.5pt] at (2,1) {};
	\node[fill, circle, inner sep=1.5pt] at (3,1) {};
	\node[fill, circle, inner sep=1.5pt] at (2,2) {};
	\node[fill, circle, inner sep=1.5pt] at (3,2) {};
	\node[fill, circle, inner sep=1.5pt] at (3,3) {};
	\end{tikzpicture}
	\hspace{2cm}
	\begin{tikzpicture}
	\draw[step=1cm,gray,very thin] (0,0) grid (2,3);
	\node[fill, circle, inner sep=1.5pt] at (0,1) {};
	\node[fill, circle, inner sep=1.5pt] at (1,2) {};
	\node[fill, circle, inner sep=1.5pt] at (2,0) {};
	\node[fill, circle, inner sep=1.5pt] at (1,1) {};
	\node[fill, circle, inner sep=1.5pt] at (2,1) {};
	\node[fill, circle, inner sep=1.5pt] at (2,2) {};
	\node[fill, circle, inner sep=1.5pt] at (2,3) {};
	\end{tikzpicture}
	\hspace{2cm}
	\begin{tikzpicture}
	\draw[step=1cm,gray,very thin] (0,0) grid (1,3);
	\node[fill, circle, inner sep=1.5pt] at (0,1) {};
	\node[fill, circle, inner sep=1.5pt] at (0,2) {};
	\node[fill, circle, inner sep=1.5pt] at (1,2) {};
	\node[fill, circle, inner sep=1.5pt] at (1,3) {};
	\node[fill, circle, inner sep=1.5pt] at (1,1) {};
	\end{tikzpicture}
	\hspace{2cm}
	\begin{tikzpicture}
	\draw[step=1cm,gray,very thin] (0,0) grid (0,3);
	\node[fill, circle, inner sep=1.5pt] at (0,1) {};
	\node[fill, circle, inner sep=1.5pt] at (0,2) {};
	\node[fill, circle, inner sep=1.5pt] at (0,0) {};
	\node[fill, circle, inner sep=1.5pt] at (0,3) {};
	\end{tikzpicture}
	\end{adjustbox}
	\caption{\label{Fig-grid} Four grids in $T_4$.}
	\end{figure}
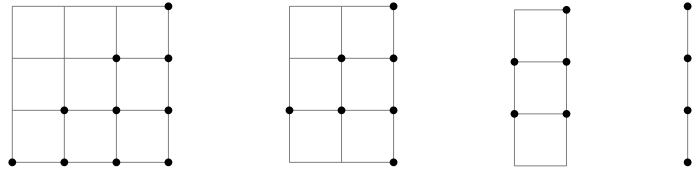

\begin{Lemma}
	 If $\Phi$ is a quantum circuit with $n$ qubits and $l$ layers, then there are at most $n!\cdot (2l)^n$ subcircuits of $\Phi$.
\end{Lemma}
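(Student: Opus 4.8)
The plan is to recast subcircuits as a purely combinatorial object on the dependency graph and then count an injective encoding of them. First I would unwind the definition: removing a gate forces the removal of every gate it depends on, i.e.\ of all its ancestors in the dependency graph, so the set of removed gates is closed under taking ancestors and the retained set $S$ is closed under taking descendants. Since a subcircuit is determined exactly by the set $S$ of gates it keeps, subcircuits of $\Phi$ are in bijection with the descendant-closed subsets (up-sets) of the dependency order. This is consistent with the three examples in Figure~\ref{fig-subcirdep}, where in each case the removed set is ancestor-closed.

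The key structural observation is that the gates acting on a single qubit $q_i$ form a chain in the dependency order: consecutive gates on $q_i$ are joined by a dependency edge (an output of one is an input of the next), so every later gate on $q_i$ depends on every earlier one. Consequently, if $S$ is descendant-closed, then the gates of $S$ lying on $q_i$ form a contiguous suffix of the time-ordered list of gates on $q_i$. Hence each subcircuit $S$ is encoded by the tuple $(s_1,\dots,s_n)$, where $s_i$ records the first kept gate on $q_i$, with a dedicated symbol used when $q_i$ keeps no gate.

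Next I would show this encoding is injective and bound its range. For injectivity, if two descendant-closed sets differ, then some gate $g$ lies in one but not the other; choosing any qubit $q_i$ on which $g$ acts, membership of $g$ forces $s_i\le\mathrm{pos}(g)$ in one set while non-membership forces $s_i>\mathrm{pos}(g)$ in the other, so the two tuples differ. For the range, note that the gates in any layer act on pairwise-disjoint qubits (they execute in parallel), so $q_i$ carries at most one gate per layer and hence at most $l$ gates in total; therefore $s_i$ takes at most $l+1$ values. This already yields at most $(l+1)^n$ subcircuits, and since $(l+1)^n\le(2l)^n\le n!\,(2l)^n$ for $l\ge 1$, the stated bound follows with room to spare.

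The main point to get right is the suffix claim together with the injectivity step, in particular the treatment of two-qubit gates: such a gate sits on two lines, so I must verify that its membership in $S$ is consistent with lying in the kept suffix of \emph{both} of its qubits, which is precisely what descendant-closure guarantees. The chain structure of the single-qubit timelines and the one-gate-per-layer-per-qubit fact (which supplies the bound $l$) are the only circuit-specific inputs; everything else is poset bookkeeping. I would finally remark that the circuit-type and $T_n$ machinery introduced just above gives an alternative, looser route to the same estimate, by counting the monotone grids of $T_n$ together with a labeling of the $n$ rows, which is presumably where the factor $n!$ originates.
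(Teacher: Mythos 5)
Your proof is correct, and it takes a genuinely different route from the paper's. Both arguments rest on the same structural fact---that a kept (descendant-closed) gate set meets each qubit's timeline in a contiguous suffix---but you exploit it directly: encoding a subcircuit by its per-qubit frontier $(s_1,\dots,s_n)$, proving that encoding injective, and bounding the range gives at most $(l+1)^n$ subcircuits, which is strictly sharper than the stated $n!\cdot(2l)^n$. The paper instead routes the count through the circuit-type machinery: it associates to each subcircuit a monotone $n\times l$ grid, compresses it to an $n\times m$ circuit type in $T_n$, and then multiplies a bound $|T_n|\leq n!\cdot 2^n$ on the number of types by a bound $l^n$ on the number of ways to re-expand a type into an $n\times l$ grid; this two-stage count is where the factors $n!$ and $2^n$ enter, exactly as you surmise in your closing remark. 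What your approach buys is a cleaner and tighter estimate (no $n!$, no $2^n$), together with an explicit injectivity argument that the paper leaves implicit (it never verifies that the grid determines the subcircuit, which is precisely your two-qubit-gate consistency check); propagated into Proposition~\ref{prop-compute-f}, your bound would also improve the $(n!)^2 4^n$ factors in the running time of Algorithm~\ref{algo-compute-fun}. What the paper's route buys is reusable infrastructure: the circuit-type encoding $T_\Psi$ is needed again in the space analysis of Algorithm~\ref{algo-compute-logspace} for the NL upper bound of Proposition~\ref{prop-nl-comple}, so introducing grids and $T_n$ here amortizes over the rest of the paper, and the looser constant is harmless for its applications. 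One marginal caveat applying to both proofs: the stated bound degenerates for $l=0$ (the empty circuit is its own subcircuit while $n!\cdot(2l)^n=0$), so the implicit assumption $l\geq 1$ that you flag is in fact needed.
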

\begin{proof}
	Every subcircuit $\Phi'$ of $\Phi$ corresponds to an $n\times l$ grid and its circuit type is an $n \times m$ grid ($m\leq n$). 
	Let $m\leq \min\{n,l\}$, there are at most $l \choose m $ possibilities to expand an $n\times m$ grid to an $n\times l$ grid by duplicating columns. 
	Since ${l \choose m} \leq l^m \leq l^n$, there are at most $l^n$ subcircuits of $\Phi$ that have the same circuit type.
	
	Note that $T_n$ contains the circuit types of all quantum circuits of $n$ qubits.
	The number of $n\times n$ grids in $T_n$ is at most $n!$. An $n\times m\,(1\leq m < n)$ grid in $T_n$ can be obtained by removing some columns from an $n\times n$ grid, so $|T_n| \leq n!\cdot 2^n$. 
	Therefore, the total number of subcircuits of $\Phi$ is at most $n!\cdot 2^n \cdot l^n = n!\cdot (2l)^n$. 
\end{proof}

\begin{Corollary}\label{coro-cardinal}
	For any minimized subcircuit set $S$ of $\Phi$, $|S|\leq n!\cdot (2l)^n$.
\end{Corollary}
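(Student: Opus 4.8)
The plan is to read the bound off directly from the preceding Lemma, since a minimized subcircuit set is nothing more than a \emph{set} of subcircuits of $\Phi$. First I would recall that the elements of any set are pairwise distinct; in particular, the subcircuits appearing in a minimized set $S$ are distinct subcircuits of $\Phi$ (by the definition of ``minimized'' they are even pairwise incomparable, though plain distinctness is all that is needed here). Consequently $S$ is a subcollection of the set of \emph{all} subcircuits of $\Phi$.

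Next I would invoke the preceding Lemma, which bounds the total number of subcircuits of a quantum circuit with $n$ qubits and $l$ layers by $n!\cdot(2l)^n$. Since the cardinality of $S$ cannot exceed the size of the ground set from which its elements are drawn, I conclude $|S|\leq n!\cdot(2l)^n$, as claimed.

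There is essentially no obstacle to overcome: the corollary is an immediate weakening of the Lemma, and the one remark worth making is that the ``minimized'' hypothesis plays no quantitative role in the estimate. It merely guarantees that $S$ is a genuine set of pairwise-incomparable subcircuits, so that its size is controlled by the number of subcircuits; the minimality and incomparability conditions only shrink $S$ relative to the full collection, and hence can never violate the bound.
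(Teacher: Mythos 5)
Your proof is correct and matches the paper's implicit argument exactly: the paper states this corollary without proof precisely because it is an immediate consequence of the preceding lemma, since any minimized set $S$ is in particular a set of distinct subcircuits of $\Phi$ and thus cannot exceed the total count $n!\cdot(2l)^n$. Your remark that the minimality/incomparability hypothesis plays no quantitative role is also accurate.
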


\begin{Proposition}\label{prop-compute-f}
	Given a quantum circuit $\Phi$ with $n$ qubits and $t$ gates, and a coupling graph $G=(V,E_c)$ with $b$ nodes, $g(\Phi,G)$ is computable in $O(b^{3n+3} t n + (n!)^2 4^n t^{2n+2} b^{2n+3})$ time.
\end{Proposition}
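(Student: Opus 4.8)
The plan is to recast the computation of $g(\Phi,G)$ as a shortest-path problem over a finite \emph{configuration graph} and to read off the running time from the sizes of its ingredients. A configuration is a pair $(\pi,\Psi)$, where $\pi$ is an injective mapping from the $n$ qubits of $\Phi$ into the $b$ nodes of $G$ (the current placement of logical qubits on physical qubits) and $\Psi$ is a subcircuit of $\Phi$ recording the gates that still have to be executed. There are two kinds of moves. A \emph{free} move executes, at zero cost, every gate of $\Psi$ that is currently ready — single-qubit gates all of whose predecessors are gone, and two-qubit gates whose predecessors are gone and whose two qubits are adjacent in $G$ under $\pi$ — shrinking $\Psi$ to its \emph{saturation} $\mathrm{sat}(\pi,\Psi)$. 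A \emph{swap} move applies a SWAP along an edge of $G$, changing $\pi$ and leaving $\Psi$ fixed, at cost $1$. Then $g(\Phi,G)$ equals the minimum, over all initial placements $\pi_0$, of the least total swap-cost of a move sequence from $(\pi_0,\Phi)$ to a configuration $(\pi,\emptyset)$; the value is finite and bounded by Lemma~\ref{lem-max-swap}, which guarantees that the search is well-defined.

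First I would bound the number of configurations. The number of placements $\pi$ is at most $b^{n}$, and by Corollary~\ref{coro-cardinal} together with $depth(\Phi)\le t$ the number of distinct subcircuits $\Psi$ is at most $n!\,(2t)^{n}$, so the configuration count is $N\le n!\,(2t)^{n}b^{n}=n!\,2^n t^n b^n$. The algorithm then runs in two phases. In the \emph{precomputation phase} I compute the pairwise swap-distance $\mathrm{sd}(\pi,\pi')$ — the least number of SWAPs turning placement $\pi$ into $\pi'$ on $G$ — for all pairs of placements, by an all-pairs shortest-path computation on the graph whose $\le b^{n}$ vertices are the placements and whose edges join placements differing by a single SWAP of $G$; this, with the cost of constructing that graph, contributes the first summand $b^{3n+3}tn$. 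In the \emph{main phase} I compute, by dynamic programming over configurations ordered by the ``smaller than'' relation on subcircuits, the least swap-cost needed to empty each configuration; a transition from $(\pi,\Psi)$ to $(\pi',\Psi')$ with $\Psi'=\mathrm{sat}(\pi',\Psi)$ \emph{strictly smaller} than $\Psi$ is charged the precomputed $\mathrm{sd}(\pi,\pi')$. Because every such transition strictly shrinks the subcircuit, the relation is acyclic, so the DP processes each ordered pair of the $N$ configurations once with only polynomial bookkeeping per pair, contributing the second summand, which is of order $N^{2}\cdot t^{2}b^{3}=(n!)^{2}4^{n}t^{2n+2}b^{2n+3}$. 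Finally $g(\Phi,G)=\min_{\pi_0} f\bigl(\pi_0,\mathrm{sat}(\pi_0,\Phi)\bigr)$.

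The correctness argument has two parts, and this is where I expect the real work to lie. I must show that interleaving free and swap moves faithfully models every legal routed execution of $\Phi$ on $G$ — in particular that it is never disadvantageous to saturate (execute all currently available gates) immediately before swapping, so that restricting to saturated configurations loses no optimal solution — and that collapsing the zero-cost free moves into the operator $\mathrm{sat}$ does not destroy the optimality of the shortest path. The main obstacle is precisely this modeling fidelity coupled with the state count: I must verify that the subcircuits arising as reachable ``remaining circuits'' are exactly the subcircuits counted by Corollary~\ref{coro-cardinal} (so that the $n!\,(2t)^{n}$ bound genuinely applies, via $depth(\Phi)\le t$), and that $\mathrm{sat}$ respects the dependency semantics, i.e.\ a gate becomes ready exactly when all gates it depends on have been removed. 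Once these are established, the stated time bound follows by summing the costs of the two phases, and the finiteness of $g(\Phi,G)$ furnished by Lemma~\ref{lem-max-swap} ensures the dynamic programming terminates with the correct minimum.
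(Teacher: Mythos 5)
Your proposal is correct in substance and would establish the stated bound (in fact a slightly stronger one), but it organizes the algorithm differently from the paper. The paper's Algorithm~\ref{algo-compute-fun} fixes an initial mapping and runs a breadth-first search in which each round corresponds to one additional SWAP: the frontier is a set of pairs (mapping, minimized set of minimal subcircuits), every edge of $E_c$ is tried in every round, and the answer is the number of rounds until an empty circuit appears, with Lemma~\ref{lem-max-swap} bounding the number of rounds. You instead precompute all-pairs swap distances on the graph of placements and then run an acyclic dynamic program over saturated configurations $(\pi,\Psi)$, with transitions that jump directly between ``shrink events'' and are weighted by those distances. Both arguments rest on the same two pillars --- the bound $n!\,(2l)^n$ on the number of subcircuits and Lemma~\ref{lem-max-swap} --- and both rely on the same tacit monotonicity fact (executing every ready gate immediately never hurts, because finishing a smaller subcircuit from the same placement costs no more); the paper uses it silently when it reduces after every swap and prunes to minimal subcircuits, while you at least name it as the thing to verify, and it is indeed provable by induction using monotonicity of saturation under the ``smaller than'' order. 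Your decomposition buys a cleaner optimality argument (the DP is over a DAG, and the block decomposition gives both inequalities directly) and a sharper count, since each configuration has only $b^n$ outgoing transitions rather than the $N^2$ pairs you conservatively charge; the paper's BFS avoids storing the $b^{2n}$-size distance table and keeps only a pruned frontier. Two small repairs: the count of all subcircuits that you need is the Lemma preceding Corollary~\ref{coro-cardinal} (the Corollary itself speaks of minimized sets, which is the paper's pruning device, not yours), and the number of placements is $b!/(b-n)!\leq b^n$, injectivity being needed for the swap moves on partially occupied edges to be well defined.
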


\begin{proof}
	There are at most $\frac{b!}{(b-n)!}$ mappings from the qubits in $\Phi$ to the nodes in $G$. We compute the minimal number of SWAP gates needed for each initial mapping, then $g(\Phi,G)$ is the smallest one among them.
	We do a breadth-first search in Algorithm~\ref{algo-compute-fun}. In each iteration, we consider all possible swap operations, and transform the quantum circuit according to them. 
	We reduce $\Phi$ after each swap operation by removing all gates that are already nearest neighbored and do not depend on any other gates to get a subcircuit.
	To prohibit search space explosion, only the minimal subcircuits are saved. The procedure is repeated until an empty circuit occurs. The outline of Algorithm~\ref{algo-compute-fun} is in the following.
	
	\begin{description}[leftmargin=10pt]
		\item[Step 1] Given an initial mapping $\pi_0$, reduce $\Phi$ by removing NN-gates under $\pi_0$ to get a subcircuit $\Phi'$. Set $\Pi=\{(\pi_0,\Phi')\}$ and $\Pi'=\emptyset$. 
		\item[Step 2] For each $(\pi,\Phi_1,\dots,\Phi_d)\in \Pi$ do \textbf{Step 3} until an empty circuit occurs.
		\item[Step 3] For each edge in $E_c$, make a swap operation on the qubits attached to the edge to get a new mapping $\pi'$.
		\begin{description}[leftmargin=10pt]
			\item[Step 3.1] Reduce each $\Phi_i\,(1\leq i \leq d)$ under $\pi'$ to get a set $\{\Phi_1',\dots,\Phi_{d}'\}$, and minimize it to get a new set $\{\Phi_1'',\dots,\Phi_{d'}''\}$ that contains only the minimal subcircuits.
			\item[Step 3.2]  If there is a tuple $\mathbf{Tu} = (\pi',\Psi_1,\dots,\Psi_{e}) \in \Pi'$ with the same mapping $\pi'$, then minimize the set $\{\Psi_1,\dots,\Psi_{e},\Phi_1'',\dots,\Phi_{d'}''\}$ to get a new set $\{\Psi_1',\dots,\Psi_{e'}'\}$, and replace $\mathbf{Tu}$ with $(\pi',\Psi_1',\dots,\Psi_{e'}')$ in $\Pi'$.
			Otherwise, put $(\pi',\Phi_1'',\dots,\Phi_{d'}'')$ in $\Pi'$.
		\end{description}
		\item[Step 4] If \textbf{Step 2} is finished and no empty circuit occurs, then replace $\Pi$ by $ \Pi'$ and set $\Pi' =\emptyset$, and repeat \textbf{Step 2} again. Otherwise, return the number of repeated times.
	\end{description}	
	
	\LinesNumbered
	\begin{algorithm}[h]
		\KwIn{A quantum circuit $\Phi$ and a coupling graph $G=(V,E_c)$.}
		\KwOut{The value of $g(\Phi,G)$.}
		\ForEach{initial mapping $\pi_0$} {
			$\Phi'\leftarrow$ reduce $\Phi$ under $\pi_0$, $\Pi\leftarrow\{(\pi_0,\Phi')\},\Pi'\leftarrow\emptyset$, $N_{\pi_0} \leftarrow 0$\;
			\leIf{$\Phi'=\emptyset$}{$Find \leftarrow Yes $}{$Find \leftarrow No$}
			\While{$Find= No$}{
				\ForEach{$(\pi,\Phi_1,\dots,\Phi_d)\in \Pi$}{
					\ForEach{edge $y \in E_c$}{
						Swap the logical qubits attached to $y$ to get a new mapping $\pi'$\;
						$(\Phi_1',\dots,\Phi_{d}')\leftarrow$ reduce $(\Phi_1,\dots,\Phi_d)$ under $\pi'$\;
						$(\Phi_1'',\dots,\Phi_{d'}'')\leftarrow$ minimize $(\Phi_1',\dots,\Phi_{d}')$\;
						\lIf{$\exists \Phi_i''=\emptyset$}{$Find\leftarrow Yes$}
						\eIf{$\exists (\pi',\Psi_1,\dots,\Psi_{e})\in \Pi'$}
						{$\{\Psi_1',\dots,\Psi_{e'}'\}\leftarrow$ minimize $\{\Psi_1,\dots,\Psi_{e},\Phi_1'',\dots,\Phi_{d'}''\}$\;
							Replace $(\pi',\Psi_1,\dots,\Psi_{e})$ with $(\pi',\Psi_1',\dots,\Psi_{e'}')$ in $\Pi'$\;}{
							$\Pi'\leftarrow \Pi'\cup (\pi',\Phi_1'',\dots,\Phi_{d'}'')$\;
						}
					}
				}
				$\Pi \leftarrow \Pi', \Pi'\leftarrow\emptyset, N_{\pi_0} \leftarrow N_{\pi_0} + 1$\;
			}
		}
		$N\leftarrow \min\{N_{\pi_0} |\pi_0 \text{ is an initial mapping}\}$\;
		\Return N\;
		
		\caption{An exact algorithm for computing $g(\Phi,G)$.}
		\label{algo-compute-fun}
	\end{algorithm}

	In \textbf{Step 1}, it takes $O(t)$ time to reduce $\Phi$. Each mapping occurs at most once in $\Pi$, so $|\Pi|\leq \frac{b!}{(b-n)!}$, and \textbf{Step 2} repeats at most $\frac{b!}{(b-n)!}$ times.
	\textbf{Step 3} repeats at most $|E_c|$ times.
	By Lemma~\ref{lem-max-swap}, \textbf{Step 4} repeats at most $t \cdotp (dia(G)-1)$ times.
	Note that the cardinality of $\{\Phi_1,\dots,\Phi_d\}$ is at most $n!\cdot (2l)^n$ for any tuple $(\pi,\Phi_1,\dots,\Phi_d)\in \Pi$ by Corollary~\ref{coro-cardinal}, where $l$ is the number of layers of $\Phi$.
	For the time complexity of \textbf{Step 3}, in \textbf{Step 3.1}, it takes $O(n!\cdot (2l)^n\cdotp t)$ time to reduce $\Phi_1,\dots,\Phi_d$, and $O((n!)^2\cdot (2l)^{2n}\cdotp t)$ time to minimize the new subcircuit set. Thus, \textbf{Step 3.1} takes $O((n!)^2\cdot (2l)^{2n}\cdotp t)$ time. 
	In \textbf{Step 3.2}, it takes $O(\frac{n\cdot b!}{(b-n)!})$ time to check whether there is a tuple in $\Pi'$ with the same mapping, and $O((n!)^2\cdot (2l)^{2n}\cdotp t)$ time to minimize them if there is one. Thus, \textbf{Step 3.2} takes $O(\frac{n\cdot b!}{(b-n)!} + (n!)^2\cdot (2l)^{2n}\cdotp t)$ time.
	The total running time of Algorithm~\ref{algo-compute-fun} is
	\[
	\frac{b!}{(b-n)!} \cdotp \left( t \cdotp (dia(G)-1) \cdotp \frac{b!}{(b-n)!} \cdotp |E_c| \cdotp O(\frac{n\cdot b!}{(b-n)!} + (n!)^2\cdot (2l)^{2n}\cdotp t)  + O(t)\right).
	\]
	Since $l\leq t$, $\frac{b!}{(b-n)!}\leq b^n$, $dia(G)-1 \leq b$ and $|E_c|\leq b^2$, we see that $g(\Phi,G)$ is computable in $O(b^{3n+3} t n + (n!)^2 4^n t^{2n+2} b^{2n+3})$ time.
\end{proof}

\begin{Proposition}
	If the number of qubits in $\Phi$ is fixed to a constant, then $g(\Phi,G)$ is computable in polynomial time.
\end{Proposition}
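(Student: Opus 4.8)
The plan is to derive this directly from Proposition~\ref{prop-compute-f}, which already bounds the running time of Algorithm~\ref{algo-compute-fun} by $O(b^{3n+3} t n + (n!)^2 4^n t^{2n+2} b^{2n+3})$, where $n$ is the number of qubits, $t$ the number of gates, and $b$ the number of nodes of the coupling graph. The whole task reduces to inspecting how each factor behaves once $n$ is frozen to a constant $c$, so I do not expect to need any new algorithmic idea here.

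First I would fix $n = c$ and examine the two summands separately. In the term $b^{3n+3} t n$, the quantities $n = c$ and the exponent $3n+3 = 3c+3$ are constants, so this contributes $O(b^{3c+3}\, t)$, a polynomial in $b$ and $t$. In the term $(n!)^2 4^n t^{2n+2} b^{2n+3}$, the leading coefficient $(c!)^2\, 4^{c}$ is a constant depending only on $c$, while the exponents $2n+2 = 2c+2$ and $2n+3 = 2c+3$ are likewise constant; hence this contributes $O\!\left(t^{2c+2} b^{2c+3}\right)$. Summing, the total running time is $O\!\left(b^{3c+3} t + t^{2c+2} b^{2c+3}\right)$, a polynomial of fixed degree in $t$ and $b$.

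Next I would argue that this is polynomial in the input size. The input consists of the description of $\Phi$ (whose size is at least the number of gates $t$) together with the coupling graph $G$ (whose size is at least the number of nodes $b$), so the input length is $\Omega(t + b)$. A bound of the form $O\!\left(b^{3c+3} t + t^{2c+2} b^{2c+3}\right)$ with $c$ constant is therefore polynomial in the input length, and the claim follows.

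I do not anticipate a genuine obstacle, since the statement is a specialization of an already-established bound; the only point requiring a moment of care is the bookkeeping that distinguishes the factors that are constant in $n$ (namely $n$, $n!$, and $4^n$) from those that are polynomial in $b$ or $t$ (namely the powers $b^{3n+3}$, $b^{2n+3}$, $t$, and $t^{2n+2}$), and the observation that freezing $n$ turns every exponent into a constant. Once this is made explicit, the conclusion that $g(\Phi,G)$ is computable in polynomial time is immediate.
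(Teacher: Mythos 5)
Your proposal is correct and matches the paper's own proof, which likewise just specializes the bound of Proposition~\ref{prop-compute-f} to a constant number of qubits $c$ and observes that the result is polynomial in $t$ and $b$. Your extra bookkeeping (separating the constant factors $n$, $n!$, $4^n$ from the polynomial ones and noting the input size is $\Omega(t+b)$) only makes explicit what the paper leaves implicit.
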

\begin{proof}
	By Proposition~\ref{prop-compute-f},  if the number of qubits in $\Phi$ is fixed to a constant $c$, then $g(\Phi,G)$ is computable in $O(t^{2c+2}b^{3c+3})$ time. This completes the proof.
\end{proof}

\begin{Corollary}
	If the number of nodes in the coupling graph $G$ is bounded by a constant, or $G$ is a fixed graph, then $g(\Phi,G)$ is computable in polynomial time.
\end{Corollary}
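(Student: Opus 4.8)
The plan is to derive this directly from the running-time bound in Proposition~\ref{prop-compute-f} by exploiting the standing assumption $|Q|\leq |V|$, i.e., $n\leq b$. First I would observe that bounding the number of nodes $b$ by a constant automatically bounds the number of qubits $n$ as well, since $n\leq b$. Thus both $n$ and $b$ become constants, and the only remaining input parameter that can grow is the number of gates $t$.

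Next I would substitute these constant bounds into the time complexity $O(b^{3n+3} t n + (n!)^2 4^n t^{2n+2} b^{2n+3})$ from Proposition~\ref{prop-compute-f}. With $b\leq c$ (and hence $n\leq c$) for some constant $c$, the factors $b^{3n+3}$, $n$, $(n!)^2$, $4^n$, and $b^{2n+3}$ all collapse to constants, while $t^{2n+2}$ is dominated by $t^{2c+2}$. The running time therefore simplifies to $O(t^{2c+2})$, which is polynomial in $t$ and hence polynomial in the size of the input.

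For the second clause, where $G$ is a fixed graph, I would note that a fixed graph has a fixed number of nodes $b$, so $b$ is again a constant and the identical argument yields a polynomial-time bound. Both cases of the corollary thus reduce to the same observation.

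The only subtlety—rather than a genuine obstacle—is recognizing that no separate hypothesis on $n$ is needed: the bound on $b$ propagates to $n$ through the standing convention $|Q|\leq|V|$, so the polynomial dependence on $t$ falls out immediately from Proposition~\ref{prop-compute-f} without any further estimation.
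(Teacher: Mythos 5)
Your proposal is correct and matches the paper's intended argument: the corollary is stated right after Proposition~\ref{prop-compute-f} (and the proposition on fixed qubit number), and the intended derivation is precisely that a constant bound on $b$ forces $n\leq b$ to be constant via the standing assumption $|Q|\leq|V|$, so the bound $O(b^{3n+3} t n + (n!)^2 4^n t^{2n+2} b^{2n+3})$ collapses to a fixed-degree polynomial in $t$. The observation that no separate hypothesis on $n$ is needed is exactly the point the paper relies on, so there is nothing to add.
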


\subsection{The complexity of QCM with fixed number of qubits}\label{subsec-qubitnum}

The QCM problem is computable in polynomial time if the number of qubits is bounded. We further show that it is NL-complete. Before stating the result to be proved, we give two lemmas in the following. The undirected shortest path (USP) problem is 
\begin{description}
	\item[INPUT:] An undirected graph $G$, two nodes $\mathbf{s},\mathbf{t}$ in $G$, and a number $k$.
	\item[OUTPUT:] Yes, if the distance between $\mathbf{s}$ and $\mathbf{t}$ is at most $k$; no, otherwise.
\end{description}
The USP problem is NL-complete~\cite{tantau2005}. We shall show that it is still NL-complete for bounded degree graphs. 
For undirected graphs, the \textit{degree} of a node is the number of edges that are incident to it. The degree of an undirected graph is the maximum of its nodes' degrees.
For directed graphs, the \textit{outdegree} (\textit{indegree}) of a node is the number of outcoming (incoming) edges that are incident to it. The outdegree (indegree) of a directed graph is the maximum of its nodes' outdegrees (indegrees).

\begin{Lemma}\label{lem-shortestpath}
	The USP problem for graphs with maximum degree 3 is NL-complete. 
\end{Lemma}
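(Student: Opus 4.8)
The plan is to establish both directions. Membership in NL is immediate: the standard nondeterministic logspace procedure for USP (guess the path one vertex at a time while maintaining a binary counter bounded by $k$) makes no use of the degree, so it works verbatim on degree-$3$ inputs. The content of the lemma is NL-hardness, which I would obtain by a logspace many-one reduction from the general USP problem (NL-complete by~\cite{tantau2005}) to its degree-$3$ restriction. Since merely subdividing edges does not lower vertex degrees, I would instead \emph{split} every vertex: replace a vertex $v$ of degree $d_v$ by a binary tree $B_v$ of depth $D_v=\lceil\log_2 d_v\rceil\le D:=\lceil\log_2 n\rceil$ whose $d_v$ leaves are put in bijection with the edges incident to $v$, and for each original edge $(u,v)$ join the corresponding leaf of $B_u$ to that of $B_v$. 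In $B_v$ the root has degree $2$, internal nodes degree $3$, and leaves degree $\le 2$, so the resulting graph $G'$ has maximum degree $3$; I would take $s'=r_s$ and $t'=r_t$, the roots of $B_s$ and $B_t$.

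The difficulty is that $B_v$ does not preserve distances exactly: traversing $B_v$ from the leaf of one incident edge to the leaf of another costs between $2$ and $2D$ depending on their lowest common ancestor, so $\mathrm{dist}_{G'}(r_s,r_t)$ is not a fixed function of $m:=\mathrm{dist}_G(s,t)$. To neutralize this variability I would subdivide every inter-tree edge into a path of length $M$ for a suitably large polynomial $M$, making edge-crossings so expensive that their \emph{count} dictates the answer. Any $r_s$--$r_t$ path in $G'$ induces an $s$--$t$ walk in $G$ whose number of steps equals the number of subdivided edges it crosses, giving the lower bound $\mathrm{dist}_{G'}(r_s,r_t)\ge mM$; conversely, routing along a shortest $G$-path and paying at most $D$ to descend/ascend the end trees and at most $2D$ per intermediate tree gives $\mathrm{dist}_{G'}(r_s,r_t)\le mM+2Dm$. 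Taking $M=2Dn+1$ and the threshold $k'=k(M+2D)$, and noting we may assume $k\le n-1$ (otherwise replace $k$ by $n-1$, which leaves the answer unchanged), the quantity $m(M+2D)$ is increasing in $m$: if $m\le k$ then $\mathrm{dist}_{G'}\le k'$, while if $m\ge k+1$ then $\mathrm{dist}_{G'}\ge (k+1)M> kM+2Dk=k'$ because $M>2Dk$. Hence $\mathrm{dist}_G(s,t)\le k$ iff $\mathrm{dist}_{G'}(s',t')\le k'$ (with the usual convention that disconnected pairs have distance $\infty$, which is preserved).

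Finally I would verify that the reduction is logspace: each tree has $O(n)$ nodes and each subdivided edge contributes $M=O(n\log n)$ nodes, so $G'$ has polynomial size and can be emitted using a constant number of $O(\log n)$-bit counters that range over vertices, their incident-edge indices, tree addresses (bitstrings of length $\le D$), and positions along a subdivision path, with $\deg(v)$ and the leaf-to-edge assignment computed by scanning the read-only input. The main obstacle, and the crux of the argument, is exactly the distance accounting of the second paragraph: because no bounded-degree gadget can make all port-to-port distances equal, exact distance preservation is impossible, and the correctness of the single threshold $k'$ hinges on choosing $M$ large enough to dominate the total intra-tree slack $2Dm\le 2Dn<M$, so that the number of edge-crossings alone separates the yes- and no-instances.
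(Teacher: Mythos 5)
Your proof is correct, but it takes a genuinely different route from the paper's. The paper proves hardness by a reduction from \emph{directed} ST-connectivity: it builds a layered $m\times m$ grid of vertex copies (rows acting as time steps), attaches to each copy an out-tree and an in-tree of uniform depth $d=\lceil\log(\deg(H)+1)\rceil$ with \emph{all leaves at depth exactly $d$}, and merges leaves so as to encode both the edges of $H$ and the option of staying at a vertex. Since every inter-row hop passes root-to-root through two full trees, every hop costs exactly $2d+2$, so the threshold $(m-1)(2d+2)$ characterizes reachability exactly and no scaling is needed. You instead reduce from undirected USP itself (legitimate here, since the paper has just recorded its NL-completeness via Tantau) and do pure degree reduction: vertex-splitting into binary trees, whose leaf-to-leaf slack you drown out by subdividing inter-tree edges by the factor $M=2Dn+1$ so that the crossing count alone decides the answer; your threshold algebra ($k'=k(M+2D)$, $k\le n-1$, $M>2Dk$) and the logspace implementation both check out. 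Your approach buys a self-contained degree-reduction with no need to handle directedness; the paper's buys exact distance control and a smaller blowup, and it relies only on the more canonical NL-complete problem. One side remark of yours deserves a caveat: you assert that no bounded-degree gadget can equalize all port-to-port distances, but the paper's gadget effectively does so for the distances that matter, by forcing every traversal to run root-to-leaf through whole trees rather than leaf-to-leaf within one tree; this is precisely how it avoids your scaling step.
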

\begin{proof}
	Obviously, the problem is in NL.
	To prove the hardness, we make a reduction from the ST-connectivity problem that is NL-complete~\cite{sipser1996}.  Given an instance $(H,\mathbf{s},\mathbf{t})$ of the ST-connectivity problem, where
	$H$ is a directed graph and $\mathbf{s},\mathbf{t}$ are two nodes in $H$, it asks whether there is a directed path from $\mathbf{s}$ to $\mathbf{t}$. 
	
	Let $a$ be a node in $H$. Denote by $outdegree(a)$ ($indegree(a)$) the outdegree (indegree) of $a$, $deg(H)$ the maximum of the outdegree and indegree of $H$. Define $d = \lceil \log(deg(H)+1) \rceil$, and $T^d_{out}(a)$ ($T^d_{in}(a)$) to be a binary tree of height $d$ such that the number of leaf nodes is $outdegree(a)+1$ ($indegree(a)+1$), and all leaf nodes have depth $d$.
	Suppose that $H$ contains $m$ nodes $a_1(=\mathbf{s}),\dots,a_m(=\mathbf{t})$. Define an undirected graph $G$ to contain $m$ rows and each row to contain $m$ nodes. 
	Let $a_{(i,j)}$ be the $j$-th node in the $i$-th row ($1\leq i \leq m, 1\leq j \leq m$).	
	For every node $a_{(1,j)}$ ($a_{(m,j)}$) where $1\leq j \leq m$, we attach a tree $T^d_{out}(a_{(1,j)})$ ($T^d_{in}(a_{(m,j)})$) that is an isomorphic copy of $T^d_{out}(a_j)$ ($T^d_{in}(a_j)$) to it, i.e, connect it to the root of the tree. 
	For every other node $a_{(i,j)}$ ($1< i < m, 1\leq j \leq m$), we attach two trees $T^d_{out}(a_{(i,j)})$ and $T^d_{in}(a_{(i,j)})$ that are isomorphic copies of $T^d_{out}(a_j)$ and $T^d_{in}(a_j)$ respectively to it.
	For $1\leq i <m$, we merge a leaf node of $T^d_{out}(a_{(i,j)})$ and a leaf node of $T^d_{in}(a_{(i+1,j)})$, and if there is an edge form $a_j$ to $a_{j'}$ in $H$, then we merge a leaf node of $T^d_{out}(a_{(i,j)})$ and a leaf node of $T^d_{in}(a_{(i+1,j')})$.
	Using the trees $T^d_{out}$ and $T^d_{in}$ as the connecting gadget guarantees that the degree of every node in $G$ is at most 3, and the distance between two nodes $a_{(i,j)}$, $a_{(i+1,j')}$ in two consecutive rows is $2d+2$ iff $j=j'$ or there is an edge from $a_j$ to $a_{j'}$ in $H$.
	
	If there is a directed path $\mathbf{s}\rightarrow a_{j}\rightarrow a_{j'}\rightarrow\cdots \rightarrow a_{j''}\rightarrow\mathbf{t}$ in $H$, then there is a path $a_{(1,1)}-\cdots -a_{(2,j)}-\cdots - a_{(3,j')}-\cdots -a_{(i,j'')}-\cdots -a_{(i+1,m)}-\cdots -a_{(m,m)}$ of length $(m-1)(2d+2)$ in $G$. Conversely, if there is a path form $a_{(1,1)}$ to $a_{(m,m)}$ with length at most $(m-1)(2d+2)$, we can deduce that the length is exactly $(m-1)(2d+2)$, since the distance between a node in the first row and a node in the last row is at least $(m-1)(2d+2)$ by the construction of $G$. A path from $\mathbf{s}$ to $\mathbf{t}$ in $H$ can be found by the path form $a_{(1,1)}$ to $a_{(m,m)}$ with length $(m-1)(2d+2)$ in $G$.

	So we can reduce $(H,\mathbf{s},\mathbf{t})$ to the instance $(G,a_{(1,1)},a_{(m,m)},(m-1)(2d+2))$ such that there is a directed path from $\mathbf{s}$ to $\mathbf{t}$ iff the distance between $a_{(1,1)}$ and $a_{(m,m)}$ is at most $(m-1)(2d+2)$ in $G$, where the degree of $G$ is at most 3.
\end{proof}

When transforming $\Phi$ over a coupling graph $G$ with the initial mapping $\pi$, we define $lw(\Phi,G,\pi)$ to be the least number $h$ of swap operations $\alpha_1,\dots,\alpha_h$ such that
\begin{itemize}
	\item if $\Psi$ is a subcircuit obtained by reducing $\Phi$ after these $h$ swap operations, then it has the same number of qubits as that of $\Phi$, and
	\item there is a swap operation $\alpha$ such that if $\Psi'$ is obtained by reducing $\Phi$ after the swap operations $\alpha_1,\dots,\alpha_h,\alpha$, then the number of qubits in $\Psi'$ is strictly less than that of $\Phi$.
\end{itemize}
Set $lw(\Phi,G)=\min\{lw(\Phi,G,\pi)\mid \pi \text{ is an initial mapping}\}$.

\begin{Lemma}\label{lem-least-swap}
	Suppose that $\Phi_1$ and $\Phi_2$ have the same number of qubits, and $\Phi$ is obtained by concatenating $\Phi_1$ and $\Phi_2$, then $lw(\Phi,G)\geq lw(\Phi_1,G)+lw(\Phi_2,G)$.
\end{Lemma}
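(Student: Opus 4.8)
The plan is to take an optimal witness for $lw(\Phi,G)$ and cut its swap sequence into a prefix chargeable to $\Phi_1$ and a suffix chargeable to $\Phi_2$, the cut being the first moment a qubit is completed inside the $\Phi_1$-part. All three circuits act on the same $n$ qubits, with $\Phi_2$ running entirely after $\Phi_1$. Fix an initial mapping $\pi$ attaining $lw(\Phi,G)=lw(\Phi,G,\pi)=h$, with swaps $\alpha_1,\dots,\alpha_h,\alpha$ witnessing it: after the first $h$ swaps the reduced subcircuit of $\Phi$ still uses all $n$ qubits, and the extra swap $\alpha$ completes some qubit $q$, so $\Phi$'s qubit count first drops at swap $h+1$. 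Write $\beta_1,\dots,\beta_{h+1}$ for the whole sequence and let $\mu_i$ be the mapping reached after $\beta_1,\dots,\beta_i$; the $\mu_i$ depend only on $\pi$ and the swaps, not on the circuit. Note that reducing $\Phi$ fires exactly the same $\Phi_1$-gates, at the same steps, as reducing $\Phi_1$ alone under the same swaps, since the $\Phi_1$-gates have only $\Phi_1$-gates as predecessors and the NN-conditions coincide; so "completed in $\Phi_1$" is unambiguous.

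I would first locate the cut. Let $m$ be the least step at which some qubit is completed in $\Phi_1$. Since completing $q$ in $\Phi$ at step $h+1$ forces all $\Phi_1$-gates on $q$ to be done, $m\le h+1$; and in the main case $m\ge 1$ the swaps $\beta_1,\dots,\beta_{m-1}$ keep all $n$ qubits of $\Phi_1$ while $\beta_m$ drops one, so this prefix witnesses $lw(\Phi_1,G)\le lw(\Phi_1,G,\pi)\le m-1$. The structural fact driving everything is that no $\Phi_2$-gate can fire before step $m$: every $\Phi_2$-gate on qubits $a$ (and $b$) depends on all $\Phi_1$-gates on those qubits and so cannot fire until $a$ (and $b$) are completed in $\Phi_1$, which first occurs at step $m$. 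Hence at the cut state, i.e. after $\beta_{m-1}$, the remaining circuit is some leftover $\Phi_1$-gates together with a completely untouched copy of $\Phi_2$.

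Next I would charge the suffix $\beta_m,\dots,\beta_{h+1}$, that is $h-m+2$ swaps, to $\Phi_2$ by a monotonicity comparison. Run $\Phi_2$ as a standalone circuit from the initial mapping $\mu_{m-1}$, applying the same swaps $\beta_m,\beta_{m+1},\dots$. I claim that at every subsequent step the set of $\Phi_2$-gates fired while continuing $\Phi$ is contained in the set fired by standalone $\Phi_2$: both start from $\mu_{m-1}$ with the same pristine $\Phi_2$, the mappings evolve identically so the NN-conditions agree, and the only difference is that inside $\Phi$ each $\Phi_2$-gate carries extra dependencies on leftover $\Phi_1$-gates. Since the greedy reduction is a monotone closure operator (removing dependency constraints and enlarging the fired set can only enlarge the closure), standalone $\Phi_2$ stays weakly ahead, by induction on the swaps and on the reduction rounds, with base case $\emptyset\subseteq(\text{gates }\Phi_2\text{ fires under }\mu_{m-1})$. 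In particular, because $\Phi$ completes $q$ in its $\Phi_2$-part by step $h+1$, standalone $\Phi_2$ from $\mu_{m-1}$ completes some qubit within those $h-m+2$ swaps, giving $lw(\Phi_2,G)\le lw(\Phi_2,G,\mu_{m-1})\le h-m+1$. Adding the two bounds yields $lw(\Phi_1,G)+lw(\Phi_2,G)\le (m-1)+(h-m+1)=h=lw(\Phi,G)$.

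I expect the monotonicity comparison to be the main obstacle: one must verify that prefixing $\Phi_2$ with the blocking $\Phi_1$-gates can only delay, never advance, the firing of $\Phi_2$-gates under an arbitrary common swap schedule, and that this ordering survives the interleaved swap-and-reduce dynamics. The clean base case is exactly what the structural fact buys us, since $\Phi_2$ is untouched at the cut. I would then dispose of the degenerate boundaries separately: when $m=0$ a $\Phi_1$-qubit already completes under $\pi$ with no swaps, forcing $lw(\Phi_1,G)=0$, and running the same comparison from $\pi$ with the full sequence gives $lw(\Phi_2,G)\le h=lw(\Phi,G)$; the remaining cases in which some $lw$ value is $0$ are absorbed by the displayed inequalities.
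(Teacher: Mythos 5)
Your proof is correct and takes essentially the same route as the paper: the paper's own (three-sentence) argument is exactly this prefix/suffix decomposition --- at least $lw(\Phi_1,G)$ swaps must elapse before any qubit of $\Phi_1$ can be completed (equivalently, before any gate of $\Phi_2$ can fire), and from that cut at least $lw(\Phi_2,G)$ further swaps are needed before any qubit of $\Phi$ completes. The only difference is the level of detail: the paper labels as ``obvious'' the monotone-reduction comparison that you carefully set up and prove by induction, so your write-up fills in precisely what the paper omits.
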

\begin{proof}
	Given an arbitrary initial mapping $\pi$, $lw(\Phi_1,G)$ is the least number of swap operations
	from which a subcircuit of $\Phi$ containing some gates in $\Phi_1$ and all gates in $\Phi_2$ can be obtained. Obviously, any $lw(\Phi_2,G)$ swap operations in the following will produce a subcircuit that has the same number of qubits as that of $\Phi$. Therefore, $lw(\Phi,G)\geq lw(\Phi_1,G)+lw(\Phi_2,G)$.
\end{proof}

We will denote by $\Phi^{*h}$ the quantum circuit constituted of $h$ consecutive copies of $\Phi$. It is easily seen that $g(\Phi^{*h},G)\geq lw(\Phi^{*h},G) \geq h\cdot lw(\Phi,G)$ by the above lemma.

\begin{Proposition}\label{prop-nl-comple}
	If the number of qubits in $\Phi$ is fixed, then the QCM problem is NL-complete.
\end{Proposition}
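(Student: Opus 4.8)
The plan is to prove both inclusions: $\textrm{QCM}$ with a fixed number of qubits lies in $\mathrm{NL}$, and it is $\mathrm{NL}$-hard.

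For membership in $\mathrm{NL}$, I would replay Algorithm~\ref{algo-compute-fun} in logarithmic space, exploiting that for a single mapping the reduction step is deterministic, so the full minimal-subcircuit-set bookkeeping is unnecessary. The key point is that when $n=|Q|$ is a constant, a configuration of the transformation is fully described by (i) the current mapping $\pi$ of the $n$ logical qubits into $V$, costing $O(n\log b)$ bits, and (ii) the current reduced subcircuit, which is determined by a \emph{frontier} recording, for each qubit, a pointer to its next unexecuted gate, costing $O(n\log t)$ bits. With $n$ fixed both are $O(\log)$ of the input size, and the reduction operation (greedily advancing the frontier over single-qubit gates and over NN two-qubit gates under $\pi$) is logspace computable. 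The witness machine then guesses an initial mapping (guessing $n$ distinct nodes), reduces, and repeatedly guesses an edge of $E_c$, applies the swap, reduces, and increments a counter; it accepts when the empty circuit is reached with counter $\le k$. By Lemma~\ref{lem-max-swap} we may cap $k$ at $t\cdot(dia(G)-1)\le tb$, so the counter needs only $O(\log)$ bits and the machine always halts. This is precisely guessing a path in a polynomial-size, logspace-describable configuration graph, i.e.\ an instance of directed reachability, so $\textrm{QCM}$ is in $\mathrm{NL}$.

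For hardness I would reduce from the degree-$3$ $\textrm{USP}$ problem, which is $\mathrm{NL}$-complete by Lemma~\ref{lem-shortestpath}. Given an instance $(G,\mathbf{s},\mathbf{t},k)$, I build a coupling graph $G'$ from $G$ by attaching small pinning gadgets at $\mathbf{s}$ and $\mathbf{t}$ (keeping the degree bounded), together with a gadget circuit $\Phi_0$ on a constant number of qubits whose execution forces an interacting qubit pair to shuttle between $\mathbf{s}$ and $\mathbf{t}$, so that the swap cost of one copy is governed by $d_G(\mathbf{s},\mathbf{t})$. The subtlety is that the initial placement is free, so the optimizer could pre-position qubits and save a bounded amount. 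I neutralize this boundary effect by taking $\Phi=\Phi_0^{*h}$ for a polynomially large $h$: the remark following Lemma~\ref{lem-least-swap} gives $g(\Phi_0^{*h},G')\ge h\cdot lw(\Phi_0,G')$, forcing each copy to pay its full shuttle cost, while an explicit routing along a shortest $\mathbf{s}$--$\mathbf{t}$ path yields a matching upper bound of the form $h\cdot(\text{cost})+C$ with $C$ an absolute constant. Choosing $h>C$ makes a single logspace-computable threshold $K$ separate $d_G(\mathbf{s},\mathbf{t})\le k$ from $d_G(\mathbf{s},\mathbf{t})>k$ exactly, completing the reduction.

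The main obstacle is entirely on the hardness side, in the exact swap accounting. I expect the hard part to be twofold: first, guaranteeing via the pinning gadgets that the forced interaction genuinely occurs at $\mathbf{s}$ and $\mathbf{t}$, so the per-copy cost is the true $G$-distance and not a shortcut through the added gadgetry; and second, proving that the per-copy lower bound delivered by Lemma~\ref{lem-least-swap} coincides with the cost achieved by the explicit routing, so that amplification yields a clean equality rather than a gap. Membership is comparatively routine once the logarithmic description of configurations is in place.
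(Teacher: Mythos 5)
Your membership argument is sound and is essentially the paper's own: it is Algorithm~\ref{algo-compute-logspace}, with the swap counter capped via Lemma~\ref{lem-max-swap} and the reduced subcircuit stored compactly (your per-qubit frontier pointers are an encoding equivalent to the paper's circuit-type tuple $T_{\Psi}$; both use $O(\log)$ space when $n$ is constant).

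The hardness half, however, has a genuine gap, and it sits exactly where you located the difficulty. You amplify the \emph{shuttle} gadget, taking $\Phi=\Phi_0^{*h}$, and rely on $g(\Phi_0^{*h},G')\geq h\cdot lw(\Phi_0,G')$ to force every copy to pay a cost of order $d_G(\mathbf{s},\mathbf{t})$. But $lw(\Phi_0,G')$ is by definition a minimum over \emph{all} initial mappings. Whatever constant-qubit gadget $\Phi_0$ you design, the optimizer may place the two qubits that are supposed to interact across the graph on \emph{adjacent} physical nodes (for instance, one register sitting on the pinning gadget at $\mathbf{s}$ and the other draped over neighboring vertices of $\mathbf{s}$), making the cross interaction free. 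Then $lw(\Phi_0,G')$ is bounded by a constant independent of $d_G(\mathbf{s},\mathbf{t})$, and $h\cdot lw(\Phi_0,G')$ yields no distance-dependent lower bound at all. This is not merely a failure of the lower-bound tool; it reflects a real cheating strategy. Qubit positions persist across copies, so once the interacting pair is adjacent it stays adjacent in all $h$ copies: the cheat avoids the shuttle entirely and pays only $O(1)$ swaps per copy to execute the off-gadget gates, i.e.\ $O(h)$ in total, independent of $d_G(\mathbf{s},\mathbf{t})$. Consequently, once the distance exceeds a constant the true optimum no longer grows with it, and no threshold $K$ can separate $d_G(\mathbf{s},\mathbf{t})\le k$ from $d_G(\mathbf{s},\mathbf{t})>k$. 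The ``bounded amount'' you hoped amplification would neutralize is in fact a saving of order $h\cdot d_G(\mathbf{s},\mathbf{t})$, because pre-positioning helps every copy, not just the first.

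The paper's construction inverts the roles precisely to avoid this. The distance-dependent interaction occurs exactly \emph{once}: a single CNOT between the first qubits of two $5$-qubit registers, placed at the very end of the circuit, with budget $k-1$. The $h$-fold repetition is applied to the \emph{pinning} circuits $(\Phi^5_{clique})^{*h}$ instead. Attaching cliques of size 4 to $\mathbf{s}$ and $\mathbf{t}$ creates the only two cliques of size 5 in $G$; any placement that does not put the two registers on these cliques already costs at least $h\geq k$ swaps for the repeated clique circuits (this is where Lemma~\ref{lem-least-swap} is actually usable, since it is applied conditionally on such placements), which exceeds the budget $k-1$. So every successful transformation is forced into the pinned placement, where the one final CNOT costs exactly $d_G(\mathbf{s},\mathbf{t})-1$ swaps. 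In short: amplification must be used to defeat cheating placements, not to force repeated shuttles — that is the idea your sketch is missing.
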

\begin{proof}
	We first give a nondeterministic algorithm for the QCM problem in Algorithm~\ref{algo-compute-logspace}. It works as follows.
	\begin{description}[leftmargin=10pt]
		\item[Step 1] For each initial mapping $\pi_0$ do the following.
		\item[Step 2] Reduce $\Phi$ under $\pi_0$ to get a subcircuit of $\Phi$.
		\item[Step 3] Nondeterministically guess a swap operation, change to a new mapping $\pi$ and reduce the previous subcircuit to a new subcircuit under $\pi$.
		\item[Step 4] Repeat \textbf{Step 3} at most $k$ times. Accept if an empty circuit occurs, reject otherwise.
	\end{description}

	\LinesNumbered
	\begin{algorithm}[h]
		\KwIn{A quantum circuit $\Phi$, a coupling graph $G$, and a number $k$.}
		\KwOut{Yes, if $k$ SWAP gates are enough to tranform $\Phi$; no otherwise.}
		\ForEach{initial mapping $\pi_0$} {
			Set $N = k$ and $\pi = \pi_0$\;
			$\Psi \leftarrow$ reduce $\Phi$ under $\pi$\;
			\While{$N \geq 1$}{
				\lIf{$\Psi$ is empty}{break and \Return yes}
				Nondeterministically guess a swap operation and update $\pi$\;
				$\Psi \leftarrow$ reduce $\Psi$ under $\pi$\;
				$N = N-1$\;	
			}
		}
		\caption{A nondeterministic algorithm for QCM.}
		\label{algo-compute-logspace}
	\end{algorithm} 
	
	Assume that $\Phi$ has $n$ qubits and $l$ layers. Let $\Psi$ be a subcircuit of $\Phi$, by the definition of circuit type, $\Psi$ corresponds to an $n\times l$ grid $D$ and can be compressed to an $n\times m$ ($m\leq n$) grid $D'$ that is the circuit type of $\Psi$. Define $T_{\Psi}=[(l_1,Q_1),\dots,(l_m,Q_m)]$, where $1\leq l_i\leq l$ and the $D$'s $l_i$-th column is the first one that is the same as the $D'$'s $i$-th column, and $Q_i$ is the set of nodes that are black in the $i$-th column of $D'$. 
	$\Psi$ can be easily recovered from $\Phi$ and $T_{\Psi}$, we shall use $T_{\Psi}$ instead of $\Psi$ in Algorithm~\ref{algo-compute-logspace} to save space.

	We now consider the space complexity of Algorithm~\ref{algo-compute-logspace}. 
	Suppose that there are $b$ nodes in $G$.
	The number $N$ and the mapping $\pi$ use $O(\log k)$ and $O(n\log b)$ space to store, respectively. Since $\Psi$ is replaced by $T_{\Psi}$, the space used is $O(n(\log l + n\log n))$. And it takes $O(\log b)$ space to store a swap operation. So the total space used by the algorithm is $O(\log k + n(\log b +\log l + n\log n))$. By Lemma~\ref{lem-max-swap}, we can assume that $k \leq n\cdot l\cdot (dia(G)-1)$. Since $n$ is a constant for all inputs, Algorithm~\ref{algo-compute-logspace} runs in $O(\log (b+l))$ space.
	Hence, the QCM problem is in NL when the number of qubits in $\Phi$ is bounded by a constant.
	
	To prove the hardness, we reduce the USP problem for graphs with maximum degree 3, which is NL-complete by Lemma~\ref{lem-shortestpath}, to the QCM problem where the quantum circuit $\Phi$ has $10$ qubits. 
	Let $\Phi^5_{clique}$ be a quantum circuit with 5 qubits and every pair of qubits are operated by a CNOT gate (see Figure~\ref{fig-cliquecircuit-a}). The topology graph of $\Phi^5_{clique}$ is a clique of size 5. 
	
	\begin{figure}[h]
		\centering
		\begin{adjustbox}{width=0.3\textwidth}
			\begin{quantikz}
				& \ctrl{1} & \ctrl{2} & \ctrl{3} & \ctrl{4} & \qw  	   & \qw 		& \qw 			& \qw	    &  \qw \\		
				& \targ{}  & \qw 	  & \qw 	 & \qw 		& \ctrl{1} & \ctrl{2}   & \ctrl{3} 		& \qw	    &  \qw \\
				& \ctrl{1} & \targ{}  & \qw 	 & \qw 		& \targ{}  & \qw 		& \qw 			& \ctrl{2} 	&  \qw \\
				& \targ{}  & \ctrl{1} & \targ{}  & \qw 		& \qw	   & \targ{}	& \qw 			& \qw  		&  \qw \\
				& \qw	   & \targ{}  & \qw 	 & \targ{}  & \qw	   & \qw		& \targ{} 		& \targ{}  	&  \qw
			\end{quantikz}
		\end{adjustbox}
		\hspace{1.5cm}
		\begin{adjustbox}{width=0.16\textwidth}
			\begin{tikzpicture}[baseline=0cm]
				\graph {subgraph K_n [n=5,clockwise,radius=2cm]};
			\end{tikzpicture}
		\end{adjustbox}
		\subfloat[\label{fig-cliquecircuit-a}]{\hspace{0.4\linewidth}}
		\subfloat[\label{fig-cliquecircuit-b}]{\hspace{0.3\linewidth}}
		\caption{\label{fig-cliquecircuit} The illustration of (\textbf{a}) the quantum circuit $\Phi^5$, and (\textbf{b}) its topology graph.}
	\end{figure}
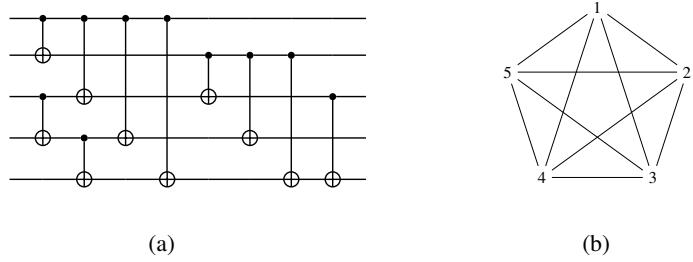
	
	Given an instance $(G',\mathbf{s},\mathbf{t},k)$ of the USP problem where $G'$ is of maximum degree 3, we define an instance $(\Phi,G,k-1)$ of the QCM problem as follows.
	\begin{itemize}
		\item $G$ is obtained by attaching two cliques of size 4 to $\mathbf{s}$ and $\mathbf{t}$ in $G'$, respectively, i.e., there are edges connecting $\mathbf{s}$ (and $\mathbf{t}$) to all nodes of the clique.
		\item $\Phi$ is constituted of two parallel copies of $(\Phi^5_{clique})^{*h}$ followed by a CNOT gate that operates on the first qubits of the two copies, where $h$ is the number of nodes in $G'$.
	\end{itemize}
If the distance between $\mathbf{s}$ and $\mathbf{t}$ is at most $k$, we can map the qubits of the two copies $(\Phi^5_{clique})^{*h}$ to the two cliques containing $\mathbf{s}$ and $\mathbf{t}$ in $G$, respectively. Hence, only the last CNOT gate need to be made nearest neighbor when executing it. This can be realized by moving $\mathbf{s}$ or $\mathbf{t}$ along the shortest path between them, where at most $k-1$ SWAP gates are enough.
On the other hand, suppose that $\Phi$ can be transformed to satisfy $G$'s NN constraint using at most $k-1$ SWAP gates. 
If the qubits in $(\Phi^5_{clique})^{*h}$ are mapped to the clique containing $\mathbf{s}$ or $\mathbf{t}$, it will use $k'-1$ SWAP gates to  make the last CNOT gate nearest neighbor, where $k'$ is the distance between $\mathbf{s}$ and $\mathbf{t}$. Otherwise, if $(\Phi^5_{clique})^{*h}$ is not implemented on the clique of size 5, then it needs at least $h$ SWAP gates by the construction of $G$ and Lemma~\ref{lem-least-swap}.
Since $k'\leq h$ and at most $k-1$ SWAP gates are enough to transform $\Phi$ over $G$, we have $k'-1\leq k-1$, which implies the distance between $\mathbf{s}$ and $\mathbf{t}$ in $G'$ is at most $k$.
\end{proof}

\subsection{Fixed-parameter complexity of QCM}\label{subsec-paracomplxy}
Though QCM is computable in polynomial time when the number of qubits is bounded, the algorithm is impractical due to the large hidden constants. 
We make an experiment of algorithm~\ref{algo-compute-fun} on a linear coupling graph and random quantum circuits with 3 and 4 qubits. Form Table~\ref{tab-algo1} and Figure~\ref{fig-experiment} we see that the running time of the quantum circuits of 4 qubits grows much faster than that of the quantum circuits of 3 qubits. This implies that the QCM problem is hard to solve if the number of qubits is fixed to a large number.
Actually, we show that QCM is fixed-parameter intractable parameterized by the number of qubits in the quantum circuit.
\begin{figure}[h!]
	\centering
	\begin{adjustbox}{width=0.8\textwidth}
		\begin{tikzpicture}
		\begin{axis}[
		title= {The value of $g(\Phi, G)$},
		xlabel={Number of gates in the circuit},
		ylabel={Number of SWAP gates},
		legend style={at={(0.25,0.85)}, anchor=center},
		]
		\addplot table {data11.dat};
		\addplot table {data2.dat};
		\legend{3 qubits, 4 qubits}
		\end{axis}
		\end{tikzpicture}
		\hspace{1cm}
		\begin{tikzpicture}
		\begin{axis}[
		title={The running time of Algorithm~\ref{algo-compute-fun}},
		xlabel={Number of gates in the circuit},
		ylabel={Time in seconds},
		legend style={at={(0.25,0.85)}, anchor=center},
		]
		\addplot table {data31.dat};
		\addplot table {data4.dat};
		\legend{3 qubits, 4 qubits}
		\end{axis}
		\end{tikzpicture}
	\end{adjustbox}
	\caption{An illustration of the number of SWAP gates and running time in Table~\ref{tab-algo1}.
	\label{fig-experiment}}
\end{figure}
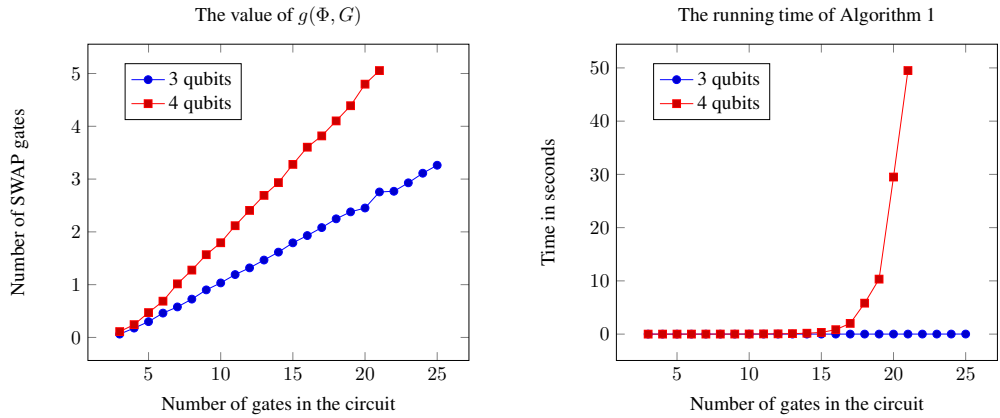

\begin{table}[h]
	\centering
	\caption{The minimal number of SWAP gates and running time of Algorithm~\ref{algo-compute-fun} on a linear coupling graph and random quantum circuits with 3 and 4 qubits, respectively. \label{tab-algo1}}
	\footnotesize
	\begin{tabular}{|c|c|c|c|c|}\hline
		\multirow{2}{*}{\diagbox{$gt$}{$qt$}} & \multicolumn{2}{c|}{3 qubits} & \multicolumn{2}{c|}{4 qubits} \\ \cline{2-5}
		&  $gt_{add}$  &  $time$  &  $gt_{add}$  &  $time$  \\ \hline
		3  &  0.063 & $1.95\times 10^{-4}$  & 0.111 &   $5.59 \times 10^{-4}$  \\ \hline
		4  &  0.177 & $2.91\times 10^{-4}$  & 0.243 &   $1.07\times 10^{-3}$ \\ \hline
		5  &  0.299 & $4.20\times 10^{-4}$  & 0.471 &   $2.03\times 10^{-3}$ \\ \hline
		6  &  0.461 & $5.76\times 10^{-4}$  & 0.687 &   $3.27\times 10^{-3}$ \\ \hline
		7  &  0.578 & $7.06\times 10^{-4}$  & 1.015 &   $6.62\times 10^{-3}$ \\ \hline
		8  &  0.726 & $8.96\times 10^{-4}$  & 1.276 &   $1.10\times 10^{-2}$ \\ \hline
		9  &  0.901 & $1.12\times 10^{-3}$  & 1.568 &   $1.46\times 10^{-2}$ \\ \hline
		10  & 1.033 & $1.31\times 10^{-3}$  & 1.794 &   $2.05\times 10^{-2}$ \\ \hline
		11  & 1.191 & $1.58\times 10^{-3}$  & 2.117 &   $4.18\times 10^{-2}$ \\ \hline
		12  & 1.318 & $1.86\times 10^{-3}$  & 2.406 &   $5.29\times 10^{-2}$ \\ \hline
		13  & 1.466 & $2.22\times 10^{-3}$  & 2.691 &   $9.34\times 10^{-2}$ \\ \hline
		14  & 1.616 & $2.63\times 10^{-3}$  & 2.933 &   $1.67\times 10^{-1}$ \\ \hline		  
		15  & 1.792 & $3.08\times 10^{-3}$  & 3.278 &   $3.15\times 10^{-1}$ \\ \hline
		16  & 1.930 & $3.62\times 10^{-3}$  & 3.604 &   $8.37\times 10^{-1}$ \\ \hline
		17  & 2.081 & $4.14\times 10^{-3}$  & 3.818 &   $2.013$ \\ \hline
		18  & 2.247 & $4.92\times 10^{-3}$  & 4.102 &   $5.814$ \\ \hline
		19  & 2.379 & $5.69\times 10^{-3}$  & 4.390 &   $10.33$ \\ \hline
		20  & 2.452 & $6.23\times 10^{-3}$  & 4.798 &   $29.50$ \\ \hline
		21  & 2.756 & $8.30\times 10^{-3}$  & 5.056 &   $49.50$ \\ \hline
		22  & 2.769 & $8.35\times 10^{-3}$  & - &  -  \\ \hline
		23  & 2.929 & $9.58\times 10^{-3}$  & - &  - \\ \hline
		24  & 3.111 & $1.14\times 10^{-2}$  & - &  - \\ \hline
		25  & 3.262 & $1.36\times 10^{-2}$  & - &  - \\ \hline
	\end{tabular}
	\medskip
	
	$gt$: the number of gates in the circuit; $qt$: the number of qubits in the circuit; $gt_{add}$: the average number of SWAP gates added; $time$: the average running time in seconds.
\end{table}

A problem is \textit{fixed-parameter tractable} if it is computable in time $f(p)\cdotp |x|^{O(1)}$, where $|x|$ is the size of input $x$, $p$ is a parameter, and $f$ is an arbitrary computable function.
Hence, if $p$ is fixed and $f(p)$ is relatively small, the problem can still be considered as tractable.
The $W$ hierarchy is a collection of complexity classes of parameterized problems to capture the fixed-parameter intractability. For $i\geq 0$, $W[i]\subseteq W[i+1]$ and $W[0]$ is the set of fixed-parameter tractable problems. 
$W[1]$ can be regarded as a parameterized version of the complexity class NP.
Every $W[j]\,(j\geq 1)$-hard problem is believed to be fixed-parameter intractable.
For more details of the parameterized complexity we refer the reader to \cite{flum2006parameter,downey2012para}.

\begin{Proposition}\label{prop-fixed-para}
	The QCM problem parameterized by the number of qubits in $\Phi$ is $W[1]$-hard.
\end{Proposition}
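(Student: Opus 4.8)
The plan is to give a parameterized (FPT) reduction from the $k$-clique problem, which is the canonical $W[1]$-complete problem parameterized by the clique size $k$~\cite{downey2012para}. The key observation is that the subgraph-isomorphism characterization already exploited in Proposition~\ref{prop-qctnpk0} becomes a parameterized reduction once we let the clique size play the role of the number of qubits. Concretely, I would generalize the gadget $\Phi^5_{clique}$ of the previous subsection to a circuit $\Phi^k_{clique}$ on $k$ qubits in which every pair of qubits is acted on by a CNOT gate; its topology graph is then the complete graph $K_k$.

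Given an instance $(H,k)$ of the clique problem, I would output the QCM instance $(\Phi^k_{clique}, G, 0)$, where $G$ is obtained from $H$ by making it connected (see below). The circuit $\Phi^k_{clique}$ has exactly $k$ qubits, so the parameter of the target instance, namely the number of qubits, is bounded by (in fact equal to) the source parameter $k$; and the construction is clearly polynomial-time computable. These two facts are exactly what a parameterized reduction requires. For correctness, note that with a SWAP budget of $0$, Proposition~\ref{prop-qctnpk0} says that $(\Phi^k_{clique}, G, 0)$ is a yes-instance iff the topology graph $K_k$ of $\Phi^k_{clique}$ is (isomorphic to) a subgraph of $G$. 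An embedding of $K_k$ into $G$ is precisely a choice of $k$ pairwise-adjacent vertices, i.e. a clique of size $k$; hence the QCM instance is a yes-instance iff $G$ contains a $K_k$.

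It remains to ensure that $G$ is connected, as the definition of a coupling graph demands, without changing whether a $k$-clique exists. Since clique detection is trivial for $k\le 2$, I may assume $k\ge 3$. To make $H$ connected I would add a single new vertex $v_0$ and join it to one arbitrarily chosen vertex of each connected component of $H$. Because any two neighbors of $v_0$ lie in distinct components and are therefore non-adjacent, $v_0$ lies in no triangle and hence in no $K_k$ with $k\ge 3$; thus $G$ contains a $K_k$ iff $H$ does, and $G$ is connected. This closes the equivalence.

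I expect the only genuinely delicate point to be the bookkeeping of the reduction rather than any hard combinatorics: one must verify that the transformation is an FPT reduction in the formal sense (target parameter bounded by a computable function of the source parameter, and the map computable in FPT time) and must treat the connectivity side condition on coupling graphs carefully so that no spurious clique is introduced. The mathematical core, that a budget-$0$ QCM instance is exactly a subgraph-isomorphism test and that embedding $K_k$ amounts to finding a clique, is immediate from Proposition~\ref{prop-qctnpk0}.
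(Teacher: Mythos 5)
Your proposal is correct and takes essentially the same route as the paper: a fixed-parameter reduction from the clique problem mapping $(H,k)$ to $(\Phi^k_{clique},G,0)$, with correctness following from the budget-$0$ case being a subgraph-isomorphism test. The only difference is that you additionally patch the connectivity requirement on coupling graphs (via the vertex $v_0$ joined to each component, which creates no new $K_k$ for $k\geq 3$), a detail the paper's proof silently glosses over even though its definition of a coupling graph demands connectedness.
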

\begin{proof}
	We make a fixed-parameter reduction from the clique problem to the QCM problem.
	The clique problem is that given an undirected graph $G$ and a number $n$, decide whether there is a clique of size $n$ in $G$. The clique problem is NP-complete, and it is $W[1]$-complete when parameterized by the size $n$ of the clique~\cite{downey2012para}.
	
	Given an instance $(G,n)$ of the clique problem, we reduce it to the instance $(\Phi^n_{clique},G,0)$ of the QCM problem. 
	$\Phi^n_{clique}$ is a quantum circuit with $n$ qubits and every pair of qubits are operated by a CNOT gate. It is easily seen that the topology graph of $\Phi^n_{clique}$ is a clique of size $n$. By the construction of $(\Phi^n_{clique},G,0)$ we see that the clique problem is fixed-parameter reducible to the QCM problem with the size of the clique and the number of qubits as parameters, respectively.
	It is obvious that $G$ contains a clique of size $n$ iff $\Phi^n_{clique}$ satisfies $G$'s NN constraint without using any SWAP gates.
\end{proof}

The QCM problem is an NP optimization problem aiming to find the least number of swap operations.
Since an NP optimization problem has a fully polynomial-time approximation scheme only if it is fixed-parameter tractable~\cite{downey2012para}.
Proposition~\ref{prop-fixed-para} implies the hardness of the QCM problem.

\subsection{The complexity of QCM on constrained quantum circuits and coupling graphs}\label{subsec-depthcoupl}

Shallow quantum circuits are the kind of quantum circuits with fixed depth. They are strictly more expressive than the classical shallow circuits~\cite{bravyi2020}. Due to their robustness to noise and decoherence, shallow quantum circuits are easily to implement on NISQ devices. In the following we consider the QCM problem on shallow quantum circuits.
First we prove a proposition about the degree of shallow quantum circuits' topology graphs.

\begin{Proposition}\label{prop-shallow-degree}
	The topology graph of every quantum circuit with maximum depth $d$ is a graph with maximum degree $d$, and every graph with maximum degree $d$ is the topology graph of a quantum circuit with maximum depth $d+1$.
\end{Proposition}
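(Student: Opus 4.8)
The plan is to prove the two directions separately, using the layer decomposition for the first claim and an edge-colouring argument for the second.

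For the forward direction, I would fix a quantum circuit $\Phi=\langle Q,\Gamma\rangle$ with $depth(\Phi)\leq d$ and an arbitrary qubit $q\in Q$. The key observation is that the layers partition $\Gamma$ into at most $d$ maximal parallel sets, and within a single layer every qubit is acted on by at most one gate (two gates sharing $q$ could not execute in parallel). Hence $q$ is touched by at most one two-qubit gate per layer, so across the $\leq d$ layers there are at most $d$ two-qubit gates incident to $q$. Since each such gate contributes at most one neighbour of $q$ in the topology graph, the degree of $q$ in $(Q,E_t)$ is at most $d$. As $q$ was arbitrary, the topology graph has maximum degree at most $d$.

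For the reverse direction, I would start from a graph $H=(V,E)$ with maximum degree $d$ and build a circuit on the qubit set $Q=V$. The central tool is \emph{Vizing's theorem}, which guarantees a proper edge-colouring of $H$ with at most $d+1$ colours; this partitions $E$ into matchings $M_1,\dots,M_{d+1}$. I would turn each colour class $M_c$ into one layer by placing, for every edge $(u,v)\in M_c$, a single two-qubit gate (e.g.\ a CNOT) on $u$ and $v$. Because $M_c$ is a matching, these gates act on pairwise disjoint qubits and therefore constitute a legal parallel layer; stacking the $\leq d+1$ layers yields a circuit of depth at most $d+1$. Its topology graph is exactly $H$, since every edge of $H$ is realised by a gate and no gate joins a non-adjacent pair, so $E_t=E$.

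The only real obstacle is obtaining the tight bound $d+1$ in the second direction, and this is precisely what Vizing's theorem supplies; a naive greedy colouring would give roughly $2d-1$ colours and hence a weaker depth bound. It is worth noting that the $+1$ cannot be removed in general, since class-two graphs of maximum degree $d$ (for instance odd cycles when $d=2$) genuinely require $d+1$ matchings, so no circuit of depth $d$ can have such a graph as its topology graph; this justifies the asymmetry in the statement. The remaining checks---that a matching yields a valid layer and that $E_t$ coincides with $E$---are routine and follow directly from the definitions of layer and of topology graph.
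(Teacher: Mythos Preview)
Your proof is correct and follows essentially the same approach as the paper: the forward direction uses the one-gate-per-qubit-per-layer observation, and the reverse direction invokes Vizing's theorem to partition the edges into at most $d+1$ matchings, each realised as a layer of CNOT gates. Your additional remark on the tightness of $d+1$ via class-two graphs is a nice touch not present in the paper's proof.
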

\begin{proof}
	Suppose that $\Phi$ is a quantum circuit with maximum depth $d$. Since a qubit is operated by at most one two-qubit gate in a layer, every qubit in $\Phi$ is operated by at most $d$ two-qubit gates. Hence, the topology graph of $\Phi$ is of maximum degree $d$ . 
	
	Conversely, given a graph $H$ with maximum degree $d$, we construct a quantum circuit $\Phi$ with maximum depth $d+1$ and its topology graph is $H$. By Vizing's theorem, we can use at most $d+1$ colors to color every edge in $H$ such that no two incident edges have the same color. So the edges in $H$ can be partitioned into at most $d+1$ sets where the edges in the same set do not have common endpoints. Define the numbers of qubits and layers of $\Phi$ to equal the numbers of nodes and colored edge sets of $H$, respectively, and two qubits are operated by a CNOT gate in the $i$-th layer iff there is an edge connecting them in the $i$-th edge set. It follows easily that $H$ is the topology graph of $\Phi$.
\end{proof}

\begin{Proposition}\label{prop-qcm-np-shallow}
	The QCM problem is NP-complete on shallow quantum circuits and bounded degree coupling graphs. More precisely, it is NP-complete if
	\begin{enumerate}[label=(\arabic*), labelsep=0.2cm]
		\item the quantum circuit has maximum depth 3, and the coupling graph is a planar bipartite and 3-degree bounded graph, or
		\item the quantum circuit has maximum depth 2, and the coupling graph is a 4-degree bounded graph.
	\end{enumerate}
\end{Proposition}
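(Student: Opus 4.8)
The plan is to establish membership in NP and then prove hardness for each case by a reduction that forces $k=0$. Membership is immediate, since the restricted instances are special cases of general QCM, which lies in NP by Proposition~\ref{prop-qctnpk0}. For hardness I will exploit the observation already used there: when $k=0$, a circuit $\Phi$ satisfies $G$'s NN constraint if and only if the topology graph of $\Phi$ is a subgraph of $G$. Choosing $\Phi$ so that its topology graph is a single spanning path or cycle turns the condition ``topology graph is a subgraph of $G$'' into exactly ``$G$ has a Hamiltonian path'' or ``$G$ has a Hamiltonian cycle.'' This lets me carry the hardness of the restricted Hamiltonian problems over to QCM while retaining full control of both the circuit depth (through the topology graph, via Proposition~\ref{prop-shallow-degree}) and the class of the coupling graph (by taking $G$ to be the Hamiltonicity instance verbatim).

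For part (1) I would reduce from the Hamiltonian cycle problem restricted to planar, bipartite graphs of maximum degree $3$, which is NP-complete. Given such a graph $H$ on $n$ vertices, let $\Phi$ be the circuit on $n$ qubits whose two-qubit gates realize the cycle $C_n$ as topology graph, and output the instance $(\Phi, H, 0)$. Since $C_n$ has maximum degree $2$, Proposition~\ref{prop-shallow-degree} yields a realization of depth at most $3$; concretely, I place the cycle edges alternately into two layers and push the single leftover edge of an odd cycle into a third layer, so depth $3$ suffices uniformly in the parity of $n$. This parity issue is precisely why part (1) needs depth $3$ rather than $2$. Because $|Q|=|V(H)|=n$, any edge-preserving initial placement is a bijection, so $C_n$ embeds as a subgraph of $H$ exactly when $H$ has a Hamiltonian cycle; by the $k=0$ characterization this holds iff $\Phi$ can be mapped onto $H$ with no SWAP gate. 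The coupling graph equals $H$, hence is planar, bipartite, and $3$-degree bounded.

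Part (2) is analogous but uses the Hamiltonian path problem on graphs of maximum degree $4$, which is NP-complete. Given such $H$ on $n$ vertices, let $\Phi$ realize the path $P_n$ as its topology graph and output $(\Phi, H, 0)$. A path is bipartite of maximum degree $2$, so its edges $2$-color properly: odd-indexed edges go into layer one and even-indexed edges into layer two, giving depth exactly $2$ for every $n$. Again $P_n$ is a subgraph of $H$ iff $H$ has a Hamiltonian path, and the $k=0$ characterization converts this into a QCM instance whose coupling graph $H$ has maximum degree $4$. Both reductions are clearly computable in polynomial time (indeed in logarithmic space), which completes the hardness proofs.

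The main obstacle I anticipate is not the reductions themselves, which are short once the $k=0$ characterization and Proposition~\ref{prop-shallow-degree} are in hand, but pinning down the exact restricted Hamiltonicity results that match the target graph classes: I must invoke the NP-completeness of Hamiltonian cycle on planar bipartite graphs of maximum degree $3$ and of Hamiltonian path on graphs of maximum degree $4$, and check that the planarity, bipartiteness, and degree bounds transfer verbatim when $G$ is taken to be the instance graph. The secondary subtlety is the parity bookkeeping for the cycle construction: an odd cycle is not a union of two matchings, which forces depth $3$ in part (1) and explains why the depth-$2$ row of Table~\ref{tab-complexy} cannot be filled for the planar, bipartite, $3$-degree bounded column by this method.
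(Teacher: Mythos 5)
Your proof is correct, and part (1) is essentially identical to the paper's argument: both reduce Hamiltonian cycle on planar bipartite degree-3 graphs to the instance $(\Phi^n_{cycle},G,0)$, using the $k=0$ characterization of QCM as subgraph containment and the two-layers-plus-one-leftover-edge layering of the cycle. Part (2), however, takes a genuinely different route. You invoke the NP-completeness of the Hamiltonian path problem on graphs of maximum degree 4 as a black box and take the coupling graph to be the Hamiltonicity instance verbatim, whereas the paper does not assume this result: it derives it on the fly, reducing from Hamiltonian cycle on degree-3 graphs and building the degree-4 graph $H$ itself by pendant-vertex gadgets (attaching two pendant nodes when a degree-2 vertex exists; otherwise duplicating the neighborhood of a vertex $a$ by a new node $b$ carrying its own pendant $c$), then mapping $G$ to $(\Phi^{n+2}_{path},H,0)$ or $(\Phi^{n+3}_{path},H,0)$. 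Your route is shorter and cleaner provided you supply a concrete citation for Hamiltonian path hardness on bounded-degree graphs (such results are indeed known, even for maximum degree 3, hence a fortiori for 4); the paper's route is self-contained, needing only the single Hamiltonian cycle citation it already uses for part (1), at the cost of the gadget bookkeeping you avoid. One small remark on your closing comment: the parity obstacle in part (1) is less rigid than you suggest, because a bipartite $G$ can have a Hamiltonian cycle only when $n$ is even, so odd-$n$ instances are trivially negative and one could restrict the reduction to even $n$, where $\Phi^n_{cycle}$ has depth 2; this observation is tangential, though, and does not affect the correctness of your proof of the proposition as stated.
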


\begin{proof}
	We make reductions from the Hamiltonian cycle problem which is NP-complete for planar bipartite graphs with maximum degree 3~\cite{itai1982}.
	The undirected path and cycle graphs have maximum degree 2. By Proposition~\ref{prop-shallow-degree}, there are quantum circuits with maximum depth 3 such that their topology graphs are path or cycle graphs.
	Let $n\geq 1$, define quantum circuit $\Phi^n_{path} = \langle Q, \varGamma \rangle$, where
	\begin{enumerate}[labelsep=0.2cm]
		\item $Q = \{q_1,\dots,q_n\}$, and $\Gamma = (g_1,\dots,g_{n-1})$ is a sequence of CNOT gates,
		\item for every $q_i\in Q\,(1\leq i < n)$, the qubits $q_i,q_{i+1}$ are operated by the gate $g_i$,
		\item the gates $\{g_i\mid i\text{ is odd}\}$ are in the 1st layer and the other gates are in the 2nd layer.
	\end{enumerate}

	Define quantum circuit $\Phi^n_{cycle}$ by adding a CNOT gate $g_n$ to $\Phi^n_{path}$ such that $g_n$ is the last gate to execute in the circuit and operates on $q_1,q_n$. Figure~\ref{fig-reduct} shows the quantum circuits $\Phi^n_{path}$ and $\Phi^n_{cycle}$ where $n$ is odd.  It is easily seen that the topology graphs of $\Phi^n_{path}$ and $\Phi^n_{cycle}$ are a path and a cycle, respectively. 
	We have $depth(\Phi^n_{path})=2$, and $depth(\Phi^n_{cycle})=2$ if $n$ is even and $depth(\Phi^n_{cycle})=3$ if $n$ is odd.
	
	\begin{figure}[h]
		\centering
		\begin{adjustbox}{width=0.25\textwidth}
			\begin{quantikz}
				\lstick{$q_1$} & \ctrl{1}  & \qw      & \qw \\
				\lstick{$q_2$} & \targ{}   & \ctrl{1} & \qw \\
				\lstick{$q_3$} & \ctrl{1}  & \targ{}  & \qw \\
				\lstick{$q_4$} & \targ{}   & \qw      & \qw \\
				&\vdots&&\\
				\lstick{$q_{n-2}$} & \ctrl{1} & \qw      & \qw \\
				\lstick{$q_{n-1}$} & \targ{}  & \ctrl{1} & \qw \\
				\lstick{$q_{n}$}   & \qw      & \targ{}  & \qw 
			\end{quantikz}
			\hspace{1cm}
			\begin{tikzcd}[every arrow/.append style={dash}]
				\tikz{\node[draw, circle, inner sep=2pt]{$q_1$}} \arrow[d]  \\
				\tikz{\node[draw, circle, inner sep=2pt]{$q_2$}} \arrow[d]  \\
				\tikz{\node[draw, circle, inner sep=2pt]{$q_3$}} \arrow[d]  \\
				\raisebox{0.1cm}{\vdots} \\
				\tikz{\node[draw, circle, inner sep=2pt]{$q_n$}} \arrow[u]  
			\end{tikzcd}
		\end{adjustbox}
		\hspace{1cm}
		\begin{adjustbox}{width=0.35\textwidth}
			\begin{quantikz}
				\lstick{$q_1$} & \ctrl{1}  & \qw      & \ctrl{7} 	 & \qw \\
				\lstick{$q_2$} & \targ{}   & \ctrl{1} & \qw  		 & \qw\\
				\lstick{$q_3$} & \ctrl{1}  & \targ{}  & \qw  		 & \qw\\
				\lstick{$q_4$} & \targ{}   & \qw      & \qw  		 & \qw\\
				&\vdots&&\\
				\lstick{$q_{n-2}$} & \ctrl{1} & \qw      & \qw 	 & \qw\\
				\lstick{$q_{n-1}$} & \targ{}  & \ctrl{1} & \qw  	 & \qw\\
				\lstick{$q_{n}$}   & \qw      & \targ{}  & \targ{}  & \qw
			\end{quantikz}
			\hspace{0.8cm}
			\begin{tikzpicture}[baseline=0cm]
				\def \radius {1.2cm}
				\def \margin {14} 
				\foreach \s in {1,...,3}
				{
					\node[draw, circle, inner sep=2pt] at ({72 * (\s - 1)}:\radius) {$q_\s$};
					\draw[black] ({72 * (\s - 1)+\margin}:\radius) arc ({72 * (\s - 1)+\margin}:{72 * (\s)-\margin}:\radius);
				}
				\node[draw, circle, inner sep=2pt] at (216:\radius) {$q_4$};
				\draw[black,dashed] ({216+\margin}:\radius) arc ({216+\margin}:{288-\margin}:\radius);
				\node[draw, circle, inner sep=2pt] at (288:\radius) {$q_n$};
				\draw[black] ({288+\margin}:\radius) arc ({288+\margin}:{360-\margin}:\radius);
			\end{tikzpicture}
		\end{adjustbox}
		\subfloat[\label{fig-reduct-a}]{\hspace{0.4\linewidth}}
		\subfloat[\label{fig-reduct-b}]{\hspace{0.4\linewidth}}
		\medskip
		\caption{\label{fig-reduct} An illustration of (\textbf{a}) the quantum circuit $\Phi^n_{path}$ whose topology graph is a path, and (\textbf{b}) the quantum circuit $\Phi^n_{cycle}$ whose topology graph is a cycle.}
	\end{figure}
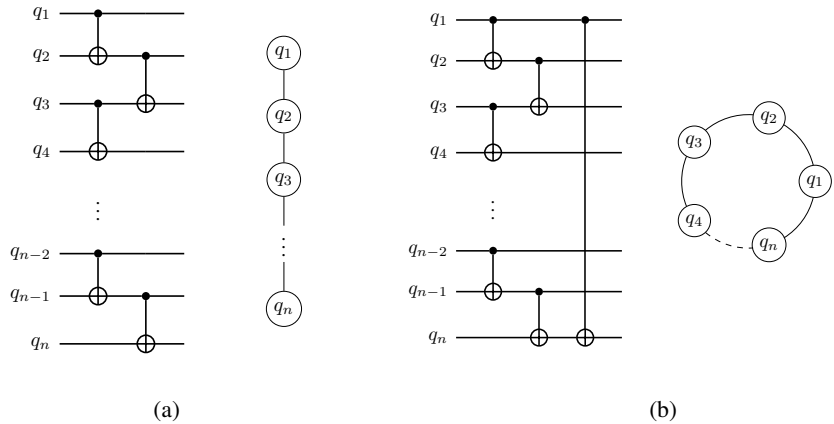

	To prove (1),  we reduce an instance $G$ of the Hamiltonian cycle problem, where $G$ is a planar bipartite and 3-degree bounded graph, to the instance $(\Phi^n_{cycle},G,0)$ of the QCM problem, where $n$ is the number of nodes in $G$. $\Phi^n_{cycle}$ has maximum depth 3. $G$ has a Hamiltonian cycle iff $\Phi^n_{cycle}$ satisfies $G$'s NN constraint without using any SWAP gates.
	
	To prove (2), given a graph $G$ with maximum degree 3 and $n$ nodes, we make the reduction as follows.
	\begin{enumerate}[labelsep=0.2cm]
		\item If there is a node $a$ with degree 2 in $G$, then define a graph $H$ by attaching two new nodes to $a$ and one of its adjacent nodes, respectively. The two new nodes have degree 1, and the degree of $H$ is at most 4, since the degree of one neighbor of $a$ is increased exactly by 1. We reduce $G$ to the instance $(\Phi^{n+2}_{path},H,0)$ of QCM.
		\item If no node in $G$ has degree 2,  then $H$ is obtained by attaching a new node to a node $a$ in $G$, and connecting one endpoint $b$ of a new edge $(b,c)$ to the neighbors of $a$. The degree of $b$ is at most 4, the degrees of $c$ and the new node connected to $a$ are 1, and the degrees of neighbors of $a$ are increased by 1. So $H$ has maximum degree 4. We reduce $G$ to the instance $(\Phi^{n+3}_{path},H,0)$ of QCM.
	\end{enumerate}
	$G$ has a Hamiltonian cycle iff $H$ has a Hamiltonian path (the two new nodes with degree 1 are the start and end points, respectively) iff $\Phi^{n+2}_{path}$ (or $\Phi^{n+3}_{path}$) satisfies $H$'s NN constraint without using any SWAP gates.
\end{proof}

The newest IBM quantum devices have adopted the heavy-hex lattice architecture, in which each unit cell consists of a hexagonal arrangement of qubits. The heavy-hex lattice is equivalent to the square lattice up to a constant overhead~\cite{ibmhex2021}.
Both of them are grids constituted of regular unit graphs.
The squares, regular hexagons and equilateral triangles are the only three kinds of regular polygons that can tile the plane by themselves.

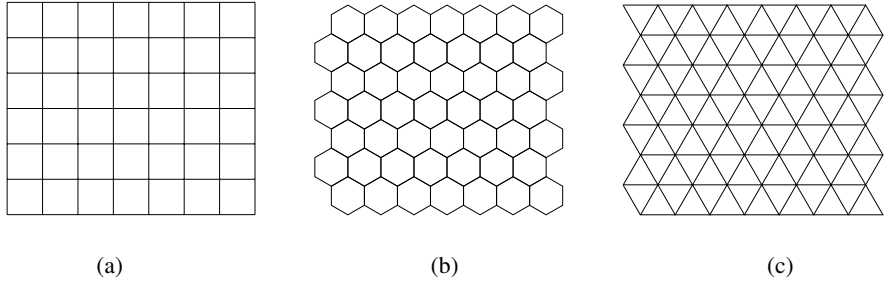
\begin{figure}[h]
	\centering
	\begin{adjustbox}{width=0.2\textwidth}
		\begin{tikzpicture}
		\draw (0,0) grid (7,6);
		\end{tikzpicture}
	\end{adjustbox}
	\qquad
	\begin{adjustbox}{width=0.2\textwidth}
		\begin{tikzpicture} [hexa/.style= {shape=regular polygon,regular polygon sides=6, minimum size=1cm,draw,rotate=30}]
		\foreach \j in {0,...,6}{
			\foreach \i in {0,...,6}{\node[hexa] at ({(\i-Mod(\j,2)/2)*sin(60)},{\j*0.75}){};}  
		}   
		\end{tikzpicture}
	\end{adjustbox}
	\qquad
	\begin{adjustbox}{width=0.21\textwidth}
		\begin{tikzpicture} [triag/.style= {shape=regular polygon,regular polygon sides=3, minimum size=1cm,draw,rotate=0}]
		\foreach \j in {0,...,6}{
		\foreach \i in {0,...,6}{\node[triag] at ({(\i-Mod(\j,2)/2)*sin(60)},{\j*0.75}){};}  
		}   
		\draw (-0.88,5) -- ++(0:6.08);
		\draw (-0.44,-0.25) -- ++(120:0.85);
		\draw (-0.44,1.25) -- ++(120:0.85);
		\draw (-0.44,2.75) -- ++(120:0.85);
		\draw (-0.44,4.25) -- ++(120:0.85);
		\draw (5.63,1.25) -- ++(60:-0.85);
		\draw (5.63,2.75) -- ++(60:-0.85);
		\draw (5.63,4.25) -- ++(60:-0.85);
		\end{tikzpicture}
	\end{adjustbox}
	\subfloat[\label{fig-regular-a}]{\hspace{0.27\linewidth}}
	\subfloat[\label{fig-regular-b}]{\hspace{0.27\linewidth}}
	\subfloat[\label{fig-regular-c}]{\hspace{0.27\linewidth}}
	\medskip
	\caption{\label{fig-regular} There kinds of grids that are made of (\textbf{a}) the squares, (\textbf{b}) the regular hexagons and (\textbf{c}) the equilateral triangles.}
\end{figure}

The grid graph is a finite induced subgraph of the (infinite) grid shown in Figure~\ref{fig-regular}. 
Using a similar proof of Proposition~\ref{prop-qcm-np-shallow}, the following proposition can be proved by the fact that the Hamiltonian cycle problem is still NP-complete on square grid graphs~\cite{itai1982}, regular hexagon grid graphs~\cite{islam2007Hamcyc} and equilateral triangle graphs~\cite{valentin2006hamtriag,valery2008hamtriag}.

\begin{Proposition}
	The QCM problem is NP-complete on grid coupling graphs.
\end{Proposition}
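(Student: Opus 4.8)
The plan is to adapt, essentially verbatim, the argument used for part (1) of Proposition~\ref{prop-qcm-np-shallow}, changing only the class of graphs from which the Hamiltonian cycle instances are drawn. First I would note that membership in NP is immediate from Proposition~\ref{prop-qctnpk0}, since grid coupling graphs are a special case of arbitrary coupling graphs. It therefore remains to establish NP-hardness for each of the three grid families (square, hexagonal, triangular) separately, and for this I would reduce from the Hamiltonian cycle problem restricted to that family.

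For the reduction itself, given a grid graph $G$ with $n$ nodes, I would map it to the QCM instance $(\Phi^n_{cycle}, G, 0)$, exactly as in Proposition~\ref{prop-qcm-np-shallow}. Recall that the topology graph of $\Phi^n_{cycle}$ is a cycle on all $n$ qubits and, by the explicit layering, $depth(\Phi^n_{cycle}) \leq 3$; this is a valid shallow circuit by Proposition~\ref{prop-shallow-degree}, since a cycle has maximum degree $2$, and it is what produces the ``depth of 3'' entry for grid graphs in Table~\ref{tab-complexy}. The correctness of the reduction reuses the characterization from the proof of Proposition~\ref{prop-qctnpk0}: with $k=0$, the circuit $\Phi^n_{cycle}$ satisfies $G$'s NN constraint if and only if its topology graph is a subgraph of $G$. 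Since $\Phi^n_{cycle}$ has exactly $n$ qubits and $G$ has exactly $n$ nodes, any such embedding is a bijection onto the node set of $G$ that sends the cycle to a spanning cycle of $G$; hence $\Phi^n_{cycle}$ satisfies $G$'s constraint with $0$ SWAP gates if and only if $G$ contains a Hamiltonian cycle. The map $G \mapsto (\Phi^n_{cycle}, G, 0)$ is clearly computable in polynomial time.

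Finally, I would invoke the known NP-completeness of the Hamiltonian cycle problem on each grid family: square grid graphs~\cite{itai1982}, regular hexagon grid graphs~\cite{islam2007Hamcyc}, and equilateral triangle grid graphs~\cite{valentin2006hamtriag,valery2008hamtriag}. Combined with the reduction above, this yields NP-hardness, and hence NP-completeness, of QCM on each of the three grid coupling-graph families, and in fact already on shallow (depth-$3$) circuits.

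The main obstacle is not in the QCM reduction, which is a direct transcription of the cycle construction already used, but lies entirely in the graph-theoretic fact that Hamiltonian cycle remains NP-complete on these three highly structured grid classes; I would import this from the cited literature rather than reprove it. The only point requiring a moment's care on the QCM side is confirming that the degree-$2$ cycle topology graph can in principle embed in each grid (the square, hexagon and triangle grids have maximum degree $4$, $3$ and $6$ respectively, all at least $2$), so that no instance is trivially rejected for structural reasons, and that the equality $|Q| = |V|$ forces every valid $0$-SWAP placement to realize a Hamiltonian cycle rather than a merely partial embedding.
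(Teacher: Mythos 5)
Your proposal matches the paper's intended proof exactly: the paper likewise reduces from the Hamiltonian cycle problem on square, regular-hexagon and equilateral-triangle grid graphs (citing the same references) via the instance $(\Phi^n_{cycle},G,0)$, reusing the argument of Proposition~\ref{prop-qcm-np-shallow}. Your write-up is, if anything, more detailed than the paper's one-sentence justification, and the details (NP membership, the $|Q|=|V|$ forcing a spanning cycle, depth at most 3) are all correct.
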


\subsection{The complexity of QCM with fixed number of SWAP operations}\label{subsec-numberofswap}

The number of swap operations is another parameter of the QCM problem. By Lemma~\ref{lem-max-swap} we know that for any quantum circuit $\Phi$ and coupling graph $G$, an upper bound of it can be computed easily. We prove that the complexity of QCM with a fixed number of swap operations is still NP-complete.

\begin{Proposition}
	The QCM problem is NP-complete for each fixed $k\geq 0$.
\end{Proposition}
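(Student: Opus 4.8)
The plan is to establish membership in NP and then NP-hardness for every fixed $k$ by a reduction from the clique problem. Membership does not even need that $k$ is fixed: exactly as in Proposition~\ref{prop-qctnpk0}, one guesses an initial mapping together with a sequence of at most $k$ swap operations and verifies NN-compliance in polynomial time. For $k=0$ the hardness is already Proposition~\ref{prop-qctnpk0}, so the real task is fixed $k\geq 1$, where I would append to a clique instance a gadget that is \emph{forced} to consume exactly $k$ swaps and that cannot share work with the main circuit.

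Given a clique instance $(G,n)$ with $n\geq 3$ and $G$ connected, put $s=|V(G)|+1$ and write $\Phi^{s}_{clique}$ for the $s$-qubit analogue of $\Phi^{5}_{clique}$ (a CNOT on every pair of qubits). I would set $G'=G\cup M_k$, where the gadget $M_k$ consists of two disjoint $s$-cliques $C_A,C_B$ joined by a fresh path of length $k+1$ (hence with $k$ internal nodes), and $M_k$ is attached to $G$ by a single bridge edge from a node of $C_A$ to a node of $G$. The circuit $\Phi'$ is the disjoint-qubit parallel composition of the main circuit $\Phi^{n}_{clique}$ (topology graph $K_n$) with a gadget circuit formed by two parallel copies of $(\Phi^{s}_{clique})^{*N}$ followed by a single CNOT on the first qubits of the two copies, where $N=k+1$. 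Since $s$ and $N$ are polynomial in the input, the construction is polynomial, and the instance is $(\Phi',G',k)$.

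For the forward direction, a size-$n$ clique of $G$ lets me place $\Phi^{n}_{clique}$ on it with no swaps, place the two gadget copies on $C_A$ and $C_B$ with no swaps, and realize the last CNOT by pushing one endpoint along the length-$(k+1)$ path, costing exactly $k$ swaps. For the converse I would reuse the clique-lock of Proposition~\ref{prop-nl-comple}: because $s$ exceeds the number of nodes of $G$, the only $s$-cliques in $G'$ are $C_A$ and $C_B$, so by that argument (via Lemma~\ref{lem-least-swap} applied to $(\Phi^{s}_{clique})^{*N}$) any placement of a copy off a genuine $s$-clique forces at least $N=k+1>k$ swaps. Hence under any $\leq k$-swap transformation both copies stay pinned to $C_A$ and $C_B$; as the bridge is the unique link between $M_k$ and $G$ there is no shortcut between the two cliques, so the joining CNOT still needs $\geq k$ swaps. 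The gadget therefore already spends all $k$ swaps, leaving budget $0$ for the main circuit.

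With $0$ swaps for the main part, $\Phi^{n}_{clique}$ must embed $K_n$ as a subgraph of $G'$ on nodes disjoint from the gadget-occupied ones. The cliques $C_A,C_B$ are fully occupied by the pinned copies, and every internal path node has degree $2$ with non-adjacent neighbours, so it lies in no triangle and hence in no $K_n$ for $n\geq 3$; therefore $K_n$ embeds inside $G$, i.e.\ $G$ has a clique of size $n$. I expect the main obstacle to be precisely this \emph{additive separation} --- showing no transformation mixes the two parts to beat $k$ swaps. The clique-lock blocks the gadget from leaking into $G$ (dragging any pinned qubit off its clique costs more than $k$), while the no-triangle property of the connecting path blocks the main circuit from leaking into $M_k$, so the optimum must split as $0+k$. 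I would pay particular attention to swaps acting on the single bridge edge, checking that such a swap can only displace a pinned gadget qubit and is thus never profitable, and to the degenerate case $k=0$, where the path collapses to the bridge and the claim reduces to Proposition~\ref{prop-qctnpk0}.
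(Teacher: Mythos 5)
Your reduction is genuinely different from the paper's --- the paper reduces from the Hamiltonian cycle problem on degree-3 graphs, with a circuit consisting of $(\Phi^5_{clique})^{*2k}$ in parallel with $(\Phi^n_{cycle})^{*2k}$ plus a joining CNOT --- and your forward direction is fine. The genuine gap is in your converse, at the step ``the gadget therefore already spends all $k$ swaps, leaving budget $0$ for the main circuit,'' followed by ``with $0$ swaps for the main part, $\Phi^n_{clique}$ must embed $K_n$ on nodes disjoint from the gadget-occupied ones.'' Swap budgets do not split additively over parallel subcircuits: a SWAP acts on two physical nodes, and the very swaps that push $a_1$ along the connecting path simultaneously displace whatever logical qubit occupies the node being entered, which may be a main-circuit qubit. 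So even when all $k$ swaps are ``charged'' to the gadget, the main-circuit qubits are not frozen: a main qubit sitting on the first internal path node rides into $C_A$ for free when $a_1$ exits, and qubits further along the path are each shifted one step. Hence ``zero remaining budget $\Rightarrow$ stationary main qubits $\Rightarrow$ $K_n$ is a static subgraph of $G'$ avoiding the gadget'' is not a valid inference, and this is exactly where the real work lies. (A smaller gap: your pinning argument tacitly assumes the two gadget copies pin to \emph{different} cliques; excluding that both use the same clique at different times needs $s>k$, i.e.\ $|V(G)|\geq k$, which is harmless for fixed $k$ but must be said.)

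The statement can be salvaged, but only by the trajectory analysis you skipped: with budget exactly $k$ one can show the initial mapping must place $a_1$ and $b_1$ at the two path endpoints, every swap must be a path-edge swap moving $a_1$/$b_1$ monotonically toward each other, dragged main qubits therefore preserve their relative order along the (triangle-free) path, so no two non-consecutive main qubits are ever adjacent and at most one main qubit ever becomes adjacent (via the bridge) to a qubit placed in $G$; for $n\geq 3$ this rules out any placement with a main qubit in the gadget region. The paper's proof sidesteps all of this with the one idea your proposal is missing: \emph{repeat the question circuit}. Because its $n$-cycle circuit is repeated $2k>k$ times and the $2k$ execution windows are disjoint in time, Lemma~\ref{lem-least-swap} yields ``at least one swap per window'' no matter what else those swaps accomplish, so no swap-sharing analysis is needed. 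Note that the analogous fix on your side --- replacing $\Phi^n_{clique}$ by $(\Phi^n_{clique})^{*(k+1)}$ --- creates a new obligation: $C_A$ and $C_B$ themselves contain $n$-cliques (since $s>n$), so you must separately argue the main circuit cannot camp there; the paper's choice of a length-$n$ cycle as the question topology, which does not embed in a $5$-clique for $n>5$, is precisely what makes that complication vanish.
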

\begin{proof}
	It is trivial for the case $k=0$, where the subgraph isomorphism problem can be reduced to this problem. 
	
	Suppose that $k\geq 1$. We make a reduction from the Hamiltonian cycle problem that is still NP-complete on graphs of maximum degree 3~\cite{garey1976}. The proof is similar to that of Proposition~\ref{prop-nl-comple}.
	Given an undirected graph $H$ of maximum degree 3, we construct an instance $(\Phi,G,k)$ such that there is a Hamiltonian cycle in $H$ iff $\Phi$ can be transformed to satisfy $G$'s NN constraint using at most $k$ SWAP gates. 
	Define $\Phi$ to contain parallelly two quantum circuits $(\Phi^5_{clique})^{*2k}$ and $(\Phi^n_{cycle})^{*2k}$ followed by a CNOT gate that operates on the first qubits of them, where $n$ is the number of nodes in $H$. Define $G$ to be a graph that connects a clique of size 5 to a node in $H$ using a path of length $k+1$. 
	
	If there is a Hamiltonian cycle in $H$, we can map the qubits of $(\Phi^5_{clique})^{*2k}$ to the clique of size 5 in $G$, and the qubits of $(\Phi^n_{cycle})^{*2k}$ to the Hamiltonian cycle in $H$. To make the last CNOT gate nearest neighbor, we can move the qubits operated by it along the path between the clique and the cycle, where at most $k$ SWAP gates are enough.
	If there is not a Hamiltonian cycle in $H$, it needs more than $k$ SWAP gates whatever we transform $\Phi$ on $G$ by Lemma~\ref{lem-least-swap}. So if $\Phi$ can be transformed to satisfy $G$'s NN constraint using at most $k$ SWAP gates, then there must be a Hamiltonian cycle in $H$.
\end{proof}

\section{Discussion}\label{sec-discus}
Quantum circuit mapping is an important procedure for running quantum circuits on NISQ devices. It transforms quantum circuits to be compliant with the nearest neighbor constraint by adding SWAP gates. The QCM problem is an NP-complete optimization problem aiming to find the minimal number of SWAP gates. So it is unlikely to get a polynomial time algorithm for it.
We study the parameterized complexity of QCM in the paper. First we give an exact algorithm that computes the minimal number of SWAP gates. The complexity analysis shows that the algorithm runs in polynomial time if the coupling graph is fixed. And if the number of qubits of the quantum circuit is fixed to a constant, the QCM problem is NL-complete, which is believed strictly below the complexity class P. We prove by a reduction from the undirected shortest path problem for graphs with maximum degree 3. Further, taking the number of qubits of the quantum circuit as a parameter, we show that QCM is W[1]-hard by a reduction from the clique problem. Every problem in W[1] is considered as fixed-parameter intractable. Hence, the QCM problem is unlikely to have a fully polynomial-time approximation scheme.

The depth of the quantum circuits and the type of the coupling graphs are the other two parameters considered in the paper. We show that the QCM problem is still NP-complete over shallow quantum circuits, and planar, bipartite and degree bounded coupling graphs. These results indicate that the number of qubits is the key factors that affect the complexity of QCM. Efficient algorithms for the quantum circuit mapping from quantum circuits with a reasonable number of qubits to a fixed quantum device are theoretically possible, and actually many such algorithms already exist. Algorithm~\ref{algo-compute-fun} can be easily adapted to save the swap actions to get a solution while computing the minimal number. But it is not practical since the result in Figure~\ref{fig-experiment} shows that the time increases significantly as the number of qubits becomes larger. So finding a novel algorithm is one of the future work.

All active IBM Quantum devices have adopted the heavy-hex lattice architecture. It is a kind of grid constituted of regular hexagons with high scalability. We also prove that QCM is NP-complete on three kinds of grid coupling graphs that are finite induced subgraphs of the (infinite) grids made up of the squares, regular hexagons and equilateral triangles, respectively. But it is still open for the complexity of QCM over some simple coupling graphs, e.g., the linear and cycle coupling graphs. Furthermore, if the coupling graph is fixed, the complexity of QCM is obviously in NL by Proposition~\ref{prop-nl-comple}, we conjecture that it is in LOGSPACE. This means that the effective quantum circuit mapping for a specified quantum device is always possible. Whether QCM is fixed-parameter tractable parameterized by the number of nodes in the coupling graph is another open problem. These can be the future work of the paper.

\bibliographystyle{plain}
\bibliography{./quantumref, ./reference}
\end{document}